\newtheorem{lemma}{Lemma}
\newtheorem{corollary}{Corollary}
\newtheorem{theorem}{Theorem}
\newtheorem{definition}{Definition}
\newtheorem{remark}{Remark}
\title{Non-Rectangular Convolutions and (Sub-)Cadences with Three Elements}
\author{Mitsuru Funakoshi\thanks{Department of Informatics, Kyushu University, Japan. Email: mitsuru.funakoshi@inf.kyushu-u.ac.jp} \and Julian Pape-Lange\thanks{Technische Universit\"at Chemnitz, Stra\ss e der Nationen 62, 09111 Chemnitz, Germany. Email: julian.pape-lange@informatik.tu-chemnitz.de}}
\date{}
\begin{document}

\maketitle

\begin{abstract}
The discrete acyclic convolution computes the $2n-1$ sums \[\displaystyle\sum_{\substack{i+j=k\\ (i,j)\in [0,1,2,\dots,n-1]^2}} a_i b_j\]
in $\mathcal{O}\left(n\log n\right)$ time. By using suitable offsets and setting some of the variables to zero, this method provides a tool to calculate all non-zero sums \[\displaystyle\sum_{\substack{i+j=k\\ (i,j)\in P\cap \mathbb{Z}^2}} a_i b_j\]
in a rectangle $P$ with perimeter $p$ in $\mathcal{O}\left(p\log p\right)$ time.

This paper extends this geometric interpretation in order to allow arbitrary convex polygons $P$ with $k$ vertices and perimeter $p$. Also, this extended algorithm only needs $\mathcal{O}\left(k + p(\log p)^2 \log k\right)$ time.

Additionally, this paper presents fast algorithms for counting sub-cadences and cadences with 3 elements using this extended method.
\end{abstract}


\section{Introduction}

The convolution is a well-known and very useful method, which is not only closely linked to signal processing (e.g.\ \cite{ConvAppThompson}) but
is also used to multiply polynomials (see \cite[p.~905]{Cormen}) and large numbers (e.g.\ \cite{Schoenhage} (written in German)) in quasi-linear time. The convolution can be efficiently computed with the fast Fourier transform or its counterpart in residue class rings, the number theoretic transform:

\begin{theorem}
	Let $a=(a_0, a_1, a_2, \dots, a_{n-1})$ and $b=(b_0, b_1, b_2, \dots, b_{n-1})$ be two sequences.
	The sequence $c=(c_0, c_1, c_2, \dots, c_{2n-2})$ with $c_k = \sum_{i+j=k} \left(a_i b_j\right)$ can be computed in $\mathcal{O}\left(n\log n\right)$ operations.
\end{theorem}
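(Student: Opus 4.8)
The plan is to recognize the sequence $c$ as the coefficient sequence of a polynomial product. I would associate to $a$ and $b$ the polynomials $A(x) = \sum_{i=0}^{n-1} a_i x^i$ and $B(x) = \sum_{j=0}^{n-1} b_j x^j$. Then the coefficient of $x^k$ in the product $A(x)B(x)$ is precisely $\sum_{i+j=k} a_i b_j = c_k$, so computing the convolution is exactly the problem of multiplying two polynomials of degree less than $n$, whose product has degree at most $2n-2$. The naive coefficient-wise multiplication costs $\mathcal{O}(n^2)$, so the entire point is to avoid it.

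The key idea is to switch between the coefficient representation and the point-value representation of a polynomial. A polynomial of degree less than $N$ is uniquely determined by its values at $N$ distinct points, and in the point-value representation the product is obtained by a single pointwise multiplication in $\mathcal{O}(N)$ time. Choosing $N$ to be a power of two with $N \ge 2n-1$, I would evaluate $A$ and $B$ at the $N$-th roots of unity, multiply the resulting value vectors pointwise, and interpolate back to recover the coefficients $c_k$. The cost therefore collapses to the cost of evaluation and interpolation at these special points.

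First I would establish that evaluation at the $N$-th roots of unity, i.e.\ the discrete Fourier transform, can be performed in $\mathcal{O}(N \log N)$ time by the standard divide-and-conquer splitting $A(x) = A_{\mathrm{even}}(x^2) + x\,A_{\mathrm{odd}}(x^2)$ into its even- and odd-indexed parts. Since the squares of the $N$-th roots of unity are exactly the $(N/2)$-th roots of unity, each evaluation reduces to two evaluations of half the size, yielding the recurrence $T(N) = 2\,T(N/2) + \mathcal{O}(N) = \mathcal{O}(N \log N)$. Interpolation is the inverse transform, which has the same structure up to conjugation of the roots and a scaling by $1/N$, so it runs within the same bound. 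With $N = \mathcal{O}(n)$ this yields the claimed $\mathcal{O}(n \log n)$ total.

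The main obstacle is ensuring that the requisite roots of unity actually exist and are invertible in the chosen coefficient domain, and that the computation stays exact. Over $\mathbb{C}$ the primitive $N$-th roots of unity are available but one must control floating-point error; the clean alternative is the number theoretic transform, where one works in a residue class ring $\mathbb{Z}/p\mathbb{Z}$ for a prime $p$ with $N \mid p-1$, so that a primitive $N$-th root of unity and the multiplicative inverse of $N$ both exist. Verifying these arithmetic prerequisites, and confirming that the inverse transform genuinely inverts the forward one (which rests on the geometric-sum identity $\sum_{t} \omega^{t\ell} = 0$ for $\ell \not\equiv 0$), is the part needing the most care; the divide-and-conquer recurrence itself is routine.
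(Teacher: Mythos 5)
Your proposal is correct and is essentially the argument the paper relies on: the paper states this theorem as the standard FFT/polynomial-multiplication result (citing Cormen et al.) and defers the exact-arithmetic concerns to its appendix, where it develops precisely the number-theoretic-transform alternative you sketch (finding a prime $p$ with $N \mid p-1$ and a suitable $2^t$-th root of unity). Your divide-and-conquer evaluation/interpolation outline and your remarks on roundoff versus residue class rings match the paper's treatment, so there is nothing to flag.
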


The most well-known proofs use additions and multiplications of arbitrary complex numbers. However, with the finite register lengths of real-world computers, one must either cope with the roundoff errors or do all calculations in a different ring. In Appendix \ref{convolutions}, we show that a suitable ring is only dependent on $\left\lfloor \log n\right\rfloor$ and can be found in $\mathcal{O}\left(n (\log n)^2 (\log\log n)\right)$ time if the generalized Riemann hypothesis is true.

The convolution can also be interpreted geometrically: Let $a=(a_0, a_1, a_2, \dots, a_{n-1})$ and $b=(b_0, b_1, b_2, \dots, b_{n-1})$ be sequences. Then the convolution calculates the partial sums
\[\displaystyle\sum_{\substack{i+j=k\\ (i,j)\in P\cap \mathbb{Z}^2}} a_i b_j\textup{,}\]
where $P$ is the square given by $\{(x,y): 0\leq x,y \leq n-1 \}$.

This paper extends this geometric interpretation and shows that if $P$ is an arbitrary convex polygon with $k$ vertices and perimeter $p$, the partial sums can be calculated in $\mathcal{O}\left(k+p(\log p)^2 \log k\right)$ time.

We also use this extended method to solve an open problem of a string pattern called cadence. A cadence is given by an arithmetic progression of occurrences of the same character in a string such that the progression can not be extended to either side without extending the string as well. For example, in the string $001001001$ the indices $(3,6,9)$ corresponding to the ``1''s form a $3$-cadence. On the other hand, in the string $001010100$ the indices $(3,5,7)$ corresponding to the ``1''s do not form a $3$-cadence since, for example, the index $1$ is still inside of the string.

$3$-cadences can be found na\"ively in quadratic time. In the paper \cite{Cadences_Amir}, a quasi-linear time algorithm for detecting the existence of $3$-cadences was proposed, but this algorithm also detects false positives as the aforementioned string $001010100$.

This paper fixes this issue and also extends the algorithm to the slightly more general notion of $(a,b,c)$-partial-$k$-cadences. The resulting extended algorithm also allows counting those partial-cadences and only needs $\mathcal{O}\left(n (\log n)^2\right)$ time.
Using a method presented by Amir et al.\ in \cite{Cadences_Amir}, this implies that all $(a,b,c)$-partial-$k$-cadences can be counted in $\mathcal{O}\left(\min(|\Sigma|n(\log n)^2, n^{3/2}\log n)\right)$ time.

Furthermore, we show that the output of the counting algorithm also allows for finding $o$ partial-cadences in $\mathcal{O}\left(on\right)$ time.

This paper also gives similar results for $3$-sub-cadences.

For the time complexity, we assume that arithmetic operations with $\mathcal{O}(\log n )$ bits can be done in constant time. In particular, we want to be able to get the remainder of a division by a prime $p<2(2n\log(2n))^2$ in constant time.

Also, in this paper, we assume a suitable alphabet. I.e.\ the characters are given by sufficiently small integers in order to allow constant time reading of a given character in the string and in order to allow sorting the characters.

\section{(Sub-)Cadences and Their Definitions}

The term cadence in the context of strings dates back to 1964 and was first introduced by Gardelle and Guilbaud in \cite{Cadences_Gardelle} (written in French). Since then, there were at least two other, slightly different and non-equivalent definitions given by Lothaire in \cite{Cadences_Lothaire} and Amir et al.\ in \cite{Cadences_Amir}.

This paper uses the most restrictive definition of the cadence, which was introduced by Amir et al.\ in \cite{Cadences_Amir}, and also uses their definition of the sub-cadence, which is equivalent to Gardelle's cadence in \cite{Cadences_Gardelle} and Lothaire's arithmetic cadence in \cite{Cadences_Lothaire}.

A string $S$ of length $n$ is the concatenation $S=S[1..n]=S[1]S[2]S[3]\dots S[n]$ of characters from an alphabet $\Sigma$.

\begin{definition}
	A \emph{$k$-sub-cadence} is a triple $(i,d,k)$ of positive integers such that
	\[S[i] = S[i+d] = S[i+2d] = \dots = S[i+(k-1)d]\]
	holds.
\end{definition}

In this paper, cadences are additionally required to start and end close to the boundaries of the string:

\begin{definition}
	A \emph{$k$-cadence} is a $k$-sub-cadence $(i,d,k)$ such that the inequalities
	$i-d \leq 0$ and $n < i+kd$
	hold.
\end{definition}

Since for any $k$-sub-cadence the inequality $i+(k-1)d \leq n$ holds, for any $k$-cadence $i+(k-1)d \leq n < i+kd$ holds.
This implies $k-1 \leq \frac{n-i}{d} < k$ and thereby $k = \left\lfloor \frac{n-i}{d} \right\rfloor + 1$.
It is therefore sufficient to omit the variable $k$ of a $k$-cadence $(i,d,k)$ and just denote this $k$-cadence by the pair $(i,d)$.

\begin{remark}[Comparison of the Definitions]\
	\begin{itemize}
		\item The cadence as defined by Lothaire is just an ordered sequence of unequal indices such that the corresponding characters are equal.
		\item The cadence as defined by Gardelle and Guilbaud additionally requires the sequence to be an arithmetic sequence.
		\item The cadence as defined by Amir et al.\ and as used in this paper additionally requires that the cadence can not be extended in any direction without extending the string as well.
	\end{itemize}
\end{remark}

For the analysis of cadences with errors, we need two more definitions:

\begin{definition}
	A \emph{$k$-cadence with at most $m$ errors} is a tuple $(i,d,k,m)$ of integers such that $i,d,k\geq 1$ and $i-d\leq 0$ and $n<i+kd$ hold and such that there are $k-m$ different integers $\pi_j\in\{0,1,2,\dots,k-1\}$ with $j = 1, 2, 3, \dots, k-m$ and
	\[S[i+\pi_1 d] = S[i+\pi_2 d] = S[i+\pi_3 d] \dots = S[i+\pi_{k-m} d]\textup{.}\]
\end{definition}

A particularly interesting case of cadences with errors is given by the partial-cadences in which we know all positions where an error is allowed:

\begin{definition}
	For some different integers $\pi_j\in\{0,1,2,\dots,k-1\}$ with $j= 1, 2, 3, \dots, p$, a \emph{$(\pi_1, \pi_2, \pi_3, \dots, \pi_p)$-partial-$k$-cadence} is a triple $(i,d,k)$ of positive integers with $i-d\leq 0$ and $n<i+kd$ such that
	\[S[i+\pi_1 d] = S[i+\pi_2 d] = S[i+\pi_3 d] \dots = S[i+\pi_{p} d]\]
	hold.
\end{definition}

\section{3-Sub-Cadences and Rectangular Convolutions}

Lothaire showed over 20 years ago that sufficiently large strings are guaranteed to have sub-cadences of a given length:

\begin{theorem}[Existence of Sub-Cadences (Lothaire \cite{Cadences_Lothaire})]\

	Let $\Sigma$ be an alphabet and $k$ an integer.
	There exists an integer $N = N(|\Sigma|, k)$ such that every string containing at least $N$ characters has at least one $k$-sub-cadence
\end{theorem}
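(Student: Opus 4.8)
The plan is to recognize this statement as a direct reformulation of van der Waerden's theorem on monochromatic arithmetic progressions. A string $S$ of length $n$ over the alphabet $\Sigma$ is nothing other than a coloring of the position set $\{1, 2, \dots, n\}$ with $|\Sigma|$ colors, where position $i$ receives the color $S[i]$. Under this correspondence, a $k$-sub-cadence $(i,d,k)$ is exactly a monochromatic arithmetic progression of length $k$, namely the positions $i, i+d, i+2d, \dots, i+(k-1)d$, all of which carry the same character. Thus the assertion that every sufficiently long string contains a $k$-sub-cadence is equivalent to the assertion that every sufficiently long $|\Sigma|$-coloring of an integer interval contains a monochromatic $k$-term arithmetic progression.

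First I would invoke van der Waerden's theorem, which guarantees for every number of colors $r$ and every length $k$ the existence of a bound $W(r,k)$ such that any $r$-coloring of $\{1, 2, \dots, W(r,k)\}$ contains a monochromatic arithmetic progression with $k$ terms. Setting $N := W(|\Sigma|, k)$ and applying this to the coloring induced by $S$ on its first $N$ positions then immediately yields indices $i \geq 1$ and $d \geq 1$ with $i + (k-1)d \leq N \leq n$ and $S[i] = S[i+d] = \dots = S[i+(k-1)d]$, which is precisely the desired $k$-sub-cadence.

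If a self-contained argument is required rather than a citation, the main obstacle is proving van der Waerden's theorem itself. The standard route is a double induction on the pair $(k, r)$: assuming the result for progressions of length $k-1$ with arbitrarily many colors, one partitions a long enough interval into consecutive blocks, uses the induction hypothesis to locate within each block a monochromatic $(k-1)$-term progression, and then applies a pigeonhole argument across the blocks to \emph{focus} several such progressions so that a single additional step extends one of them to length $k$. Carefully tracking how the block length and the number of required blocks grow is the delicate part, but it yields only the existence of $N$, which is all the statement asserts; in particular, no explicit or good bound on $N(|\Sigma|, k)$ is needed here.
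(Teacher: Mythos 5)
Your proof is correct and is exactly the intended argument: the paper itself gives no proof, citing Lothaire, and Lothaire's result is precisely the observation that a $k$-sub-cadence is a monochromatic $k$-term arithmetic progression, so the claim is van der Waerden's theorem applied to the $|\Sigma|$-coloring of positions induced by $S$. Your identification of $N(|\Sigma|,k)$ with the van der Waerden number $W(|\Sigma|,k)$ and your sketch of the focusing/double-induction proof are both accurate.
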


However, this theorem does not provide the number of $k$-sub-cadences of a given string.

In this section, we will show that $3$-sub-cadences with a given character of a string of length $n$ can be efficiently counted in $\mathcal{O}\left(n\log n \right)$ time.
We will also show that arbitrary $3$-sub-cadences of a string of length $n$ can be counted in $\mathcal{O}\left(n^{3/2}(\log n )^{1/2}\right)$ time and that both counting algorithms allow to output $o$ different $3$-sub-cadences in $\mathcal{O}\left(on\right)$ additional time if at least $o$ different $3$-sub-cadences exist.

Let $\sigma\in\Sigma$ be a character. We will now count all $3$-sub-cadences with character $\sigma$.

Let $(i,d)$ be a $3$-sub-cadence. Since $i+d=\frac{i+(i+2d)}{2}$ holds, the position $i+d$ of the middle occurrence of $\sigma$ only depends on the sum of the index $i$ of first occurrence and the index $i+2d$ of the third occurrence but does not depend on the individual indices of those two positions. Therefore, it is possible to determine the candidates for the middle occurrences with the convolution of the candidates of the first occurrence and the candidates of the third occurrence.

Let the sequence $\delta=(\delta_0, \delta_1, \delta_2, \dots, \delta_n)$ be given by the indicator function for $\sigma$ in $S$:
\[\delta_i := \begin{cases} 1 &\textup{if $S[i]=\sigma$} \\ 0 &\textup{if $S[i]\neq\sigma$ (this includes $i=0$)}\end{cases}\]

With this definition, the product $\delta_i \delta_j$ is $1$ if and only if $S[i]=S[j]=\sigma$ and otherwise is $0$. Therefore $c_k = \sum_{i+j=k} \left(\delta_i \delta_j\right) = \#\{i: S[i]=S[k-i]=\sigma\}$ counts in how many ways the index $\frac{k}{2}$ lies in the middle of two $\sigma$. These partial sums can be calculated in $\mathcal{O}\left(n \log n\right)$ time by convolution.

If $k$ is odd or $S\left[\frac{k}{2}\right]\neq\sigma$ holds, the index $\frac{k}{2}$ can not be the middle index of a $3$-sub-cadence.
If $S\left[\frac{k}{2}\right]=\sigma$ holds, the indicator function $\delta_{\frac{k}{2}}$ is $1$, and therefore $\delta_{\frac{k}{2}}\delta_{\frac{k}{2}}=1$ holds as well. Since $(\delta_{\frac{k}{2}},0,3)$ is not a $3$-sub-cadence, the output element $c_k$ contains one false positive. Additionally, for $i+j=k$ with $i\neq j$ and $S[i]=S[j]=\sigma$, the output element counts the combination $\delta_i \delta_j$ as well as $\delta_j \delta_i$.

Therefore,
\[s_k := \begin{cases} \frac{c_{2k}-1}{2} &\textup{if $S[k]=\sigma$}\\ 0 &\textup{if $S[k]\neq \sigma$}\end{cases}\]
counts exactly the number of $3$-sub-cadences with character $\sigma$ such that the second occurrence of $\sigma$ has index $k$. The sum of the $s_k$ is the number of total $3$-sub-cadences with character $\sigma$.

Also, for each $s_k\neq 0$, all those $s_k$ $3$-sub-cadences can be found in $\mathcal{O}(k)\subseteq\mathcal{O}(n)$ time by checking for each index $i<k$ whether $S[i]=S[k]=S[2k-i]=\sigma$ holds.

If the character $\sigma$ is rare, we can also follow the idea of Amir et al.\ in \cite{Cadences_Amir} for detecting $3$-cadences with rare characters: If all $n_\sigma$ occurrences of the character are known, the $c_k$ can be computed in $\mathcal{O}(n_\sigma^2)$ time by computing every pair of those occurrences. Therefore:

\begin{theorem}
	For every character $\sigma\in\Sigma$, the $3$-sub-cadences with $\sigma$ can be counted in $\mathcal{O}(n\log n)$ time. Also, if all $n_\sigma$ occurrences of $\sigma$ are known, the $3$-sub-cadences with $\sigma$ can be counted in $\mathcal{O}(n_\sigma^2)$ time.
\end{theorem}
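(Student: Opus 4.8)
The plan is to reduce the counting problem to a single acyclic convolution and then correct for the obvious over-counting. First I would encode the occurrences of $\sigma$ by the indicator sequence $\delta=(\delta_0,\delta_1,\dots,\delta_n)$ with $\delta_i=1$ exactly when $S[i]=\sigma$ (and $\delta_0=0$ by convention), so that a product $\delta_i\delta_j$ detects precisely the event $S[i]=S[j]=\sigma$. The guiding observation is that for a $3$-sub-cadence $(i,d)$ the middle index $i+d$ is the arithmetic mean of the outer indices $i$ and $i+2d$; hence the middle position is determined by the \emph{sum} $i+(i+2d)=2(i+d)$ alone, which is exactly the quantity the convolution aggregates.

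Next I would invoke the convolution theorem to compute $c_k=\sum_{i+j=k}\delta_i\delta_j$ for all $k$ in $\mathcal{O}(n\log n)$ time. By construction $c_k$ equals the number of ordered pairs $(i,j)$ with $i+j=k$ and $S[i]=S[j]=\sigma$. Restricting to even sums $k=2m$, each such pair records that the index $m$ sits halfway between two occurrences of $\sigma$, so that all candidate middle positions are read off directly from the even-indexed entries of $c$.

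The third step is the accounting that turns $c_{2m}$ into the true count of $3$-sub-cadences with middle index $m$. If $S[m]\neq\sigma$, then $m$ cannot be a middle occurrence and contributes nothing. If $S[m]=\sigma$, then $c_{2m}$ counts, besides the genuine triples, the single degenerate pair $(m,m)$ (corresponding to $d=0$), and it counts every genuine pair $\{i,2m-i\}$ with $i\neq m$ twice, once as $(i,2m-i)$ and once as $(2m-i,i)$. Subtracting the one diagonal term and halving therefore yields $s_m=\frac{c_{2m}-1}{2}$, the exact number of $3$-sub-cadences whose second element is at index $m$; summing $s_m$ over all $m$ with $S[m]=\sigma$ gives the total, and the whole procedure runs in $\mathcal{O}(n\log n)$ time.

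For the second bound I would simply swap out the engine used to build $c$: when the $n_\sigma$ positions of $\sigma$ are listed explicitly, I would form the relevant entries of $c$ by iterating over all pairs of these occurrences and incrementing the bucket indexed by their sum, at cost $\mathcal{O}(n_\sigma^2)$; the identical correction $s_m=\frac{c_{2m}-1}{2}$ then applies verbatim. I do not expect a deep obstacle here, since both bounds reuse the same algebraic identity. The one delicate point is the off-by-one-and-by-two bookkeeping in the correction term, which must be argued carefully so that the diagonal contribution is removed exactly once and every genuine ordered pair is deflated by precisely a factor of two.
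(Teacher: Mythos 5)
Your proposal is correct and follows essentially the same route as the paper: the same indicator sequence $\delta$, the same convolution $c_k=\sum_{i+j=k}\delta_i\delta_j$, the identical correction $s_m=\frac{c_{2m}-1}{2}$ removing the diagonal $(m,m)$ term and the ordered-pair double count, and the same pairwise enumeration of the $n_\sigma$ occurrences for the $\mathcal{O}(n_\sigma^2)$ bound. No gaps.
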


Following the proof in \cite{Cadences_Amir}, we can get all occurrences of every character by sorting the input string in $\mathcal{O}\left(n \log n\right)$ time. This implies that the algorithm needs at most\\ $\mathcal{O}\left(\sum_{\sigma\in\Sigma} \min(n_\sigma^2, n\log n)\right)
\subseteq\mathcal{O}\left(\frac{n}{(n\log n)^{1/2}} n\log n\right) = \mathcal{O}(n^{3/2}(\log n)^{1/2})$ time.

\begin{theorem}
	The number of all $3$-sub-cadences can be counted in
	\[\mathcal{O}\left(\min(|\Sigma|n\log n, n^{3/2}(\log n)^{1/2})\right) \textup{ time.}\]
\end{theorem}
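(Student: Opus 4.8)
The plan is to obtain the total count by summing the per-character counts guaranteed by the previous theorem, and then to establish each of the two bounds inside the minimum separately. Since every $3$-sub-cadence uses exactly one character, the set of all $3$-sub-cadences is the disjoint union, over $\sigma\in\Sigma$, of the $3$-sub-cadences with character $\sigma$. Hence it suffices to run a per-character counting routine for each $\sigma$ and add the results; the additions contribute only $\mathcal{O}(|\Sigma|)$ time and can be absorbed into either bound.

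For the first bound, I would simply invoke the $\mathcal{O}(n\log n)$ convolution-based algorithm of the previous theorem once for each of the $|\Sigma|$ characters. This immediately gives a total running time of $\mathcal{O}(|\Sigma|\,n\log n)$.

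For the second bound, I would first compute all occurrences of every character by sorting the indices of $S$ by their character in $\mathcal{O}(n\log n)$ time, as in \cite{Cadences_Amir}; this yields, for each $\sigma$, the number $n_\sigma$ of its occurrences together with their positions. For each character I would then pick the cheaper of the two available routines, spending $\mathcal{O}(\min(n_\sigma^2,\,n\log n))$ time, so the total cost is $\mathcal{O}\bigl(\sum_{\sigma\in\Sigma}\min(n_\sigma^2,\,n\log n)\bigr)$.

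The main obstacle is bounding this sum, and I would do so by splitting $\Sigma$ at the threshold $n_\sigma=(n\log n)^{1/2}$. For the \emph{frequent} characters with $n_\sigma>(n\log n)^{1/2}$ I would use the $\mathcal{O}(n\log n)$ bound; since $\sum_{\sigma}n_\sigma\leq n$, there are at most $n/(n\log n)^{1/2}=(n/\log n)^{1/2}$ such characters, so they contribute at most $(n/\log n)^{1/2}\cdot n\log n=n^{3/2}(\log n)^{1/2}$. For the \emph{rare} characters with $n_\sigma\leq(n\log n)^{1/2}$ I would use the $\mathcal{O}(n_\sigma^2)$ bound and estimate $n_\sigma^2\leq n_\sigma(n\log n)^{1/2}$, so that they contribute at most $(n\log n)^{1/2}\sum_{\sigma}n_\sigma\leq n^{3/2}(\log n)^{1/2}$. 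Adding the two parts gives $\mathcal{O}(n^{3/2}(\log n)^{1/2})$, and taking the better of the two strategies for the given input yields the claimed $\mathcal{O}(\min(|\Sigma|n\log n,\,n^{3/2}(\log n)^{1/2}))$ running time.
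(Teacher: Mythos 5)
Your proposal is correct and follows essentially the same route as the paper: run the per-character routine choosing the cheaper of the $\mathcal{O}(n_\sigma^2)$ and $\mathcal{O}(n\log n)$ options after an $\mathcal{O}(n\log n)$ sorting step, and bound $\sum_{\sigma}\min(n_\sigma^2, n\log n)$ by balancing at the threshold $n_\sigma=(n\log n)^{1/2}$. The paper states this bound in a single terse chain of inclusions; your version merely spells out the frequent/rare split explicitly, which matches the intended argument.
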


\begin{theorem}
	After counting at least $o$ $3$-sub-cadences, it is possible to output $o$ $3$-sub-cadences in $\mathcal{O}(on)$ time.
\end{theorem}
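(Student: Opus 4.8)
The plan is to reuse the quantities already produced by the counting algorithm rather than recompute anything. Recall that for a fixed character $\sigma$ the counting step yields, for every index $k$ with $S[k]=\sigma$, the value $s_k$, which is exactly the number of $3$-sub-cadences whose middle occurrence sits at position $k$, and that $s_k=0$ when $S[k]\ne\sigma$. Taking this over all characters, after the counting phase I assume access to the array $(s_1,\dots,s_n)$, where $s_k$ counts the $3$-sub-cadences of the character $S[k]$ with middle index $k$, so that $\sum_k s_k$ is the total number of $3$-sub-cadences. The hypothesis that at least $o$ sub-cadences have been counted means precisely that $\sum_k s_k\ge o$, so the quota below is attainable.

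The enumeration rests on the observation already recorded above: for any fixed $k$ with $s_k>0$, all $s_k$ sub-cadences with middle index $k$ can be listed in $\mathcal{O}(k)\subseteq\mathcal{O}(n)$ time by iterating $i$ over the range $\max(1,2k-n)\le i<k$ and emitting the pair $(i,k-i)$ whenever $S[i]=S[k]=S[2k-i]$. Restricting to $i<k$ makes each sub-cadence appear exactly once and excludes the degenerate pair $i=k$, matching the halving and the $-1$ in the definition of $s_k$. The algorithm is then simply to scan $k=1,2,\dots,n$; whenever $s_k>0$, run this listing procedure for the current $k$ and append the resulting pairs to the output, maintaining a running count of emitted sub-cadences and halting as soon as it reaches $o$. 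Since $\sum_k s_k\ge o$, the scan is guaranteed to produce $o$ sub-cadences before $k$ exceeds $n$.

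For the running time I would argue as follows. Scanning the indices and reading each $s_k$ costs $\mathcal{O}(1)$ per index and hence $\mathcal{O}(n)$ in total. Every index $k$ on which the listing procedure is actually invoked has $s_k\ge 1$ and therefore contributes at least one sub-cadence to the output; consequently the procedure is invoked on at most $o$ indices before the quota is met. Each such invocation costs $\mathcal{O}(k)\subseteq\mathcal{O}(n)$, so the total listing cost is $\mathcal{O}(on)$, and the overall time is $\mathcal{O}(n)+\mathcal{O}(on)=\mathcal{O}(on)$.

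The only point requiring care — and the main obstacle to a naive bound — is precisely this amortization. An individual productive index may cost as much as $\mathcal{O}(n)$, so the analysis must charge that cost against the at-least-one sub-cadence it emits and cap the number of productive indices by $o$, rather than by the number of nonzero entries of $(s_k)$, which could itself be as large as $\Theta(n)$. Once the per-index cost is tied to an output element in this way, the $\mathcal{O}(on)$ bound follows, and the $\mathcal{O}(n)$ scanning term is absorbed since $o\ge 1$ whenever there is anything to output.
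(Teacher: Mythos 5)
Your proposal is correct and follows the same route the paper takes: it reuses the precomputed values $s_k$, lists all sub-cadences with middle index $k$ in $\mathcal{O}(k)\subseteq\mathcal{O}(n)$ time by testing $S[i]=S[k]=S[2k-i]$ for $i<k$, and charges each such invocation to the at least one sub-cadence it emits so that at most $o$ productive indices are processed. Your explicit treatment of the amortization (bounding the number of invocations by $o$ rather than by the number of nonzero $s_k$) is exactly the point the paper leaves implicit.
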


\section{Non-Rectangular Convolutions}

In this section, we will extend the geometric interpretation of the convolution and show that for convex polygons $P$ with $k$ vertices and perimeter $p$ it is possible to calculate the partial sums
\[\displaystyle c_k = \sum_{\substack{i+j=k\\ (i,j)\in P\cap \mathbb{Z}^2}} a_i b_j\]
in $\mathcal{O}\left(k+p (\log p)^2 \log k\right)$ time.

Let's imagine a graph where all integer-coordinates $(i,j)$ have the value $f(i,j) := a_i b_j$.
We don't need the convolution in order to determine the sum of the function values in a given rectangle since we can use the simple factorization $\sum_{i=0}^{n} \sum_{j=0}^{m} \left(a_i b_j\right) = \big(\sum_{i=0}^{n} a_i\big) \big(\sum_{j=0}^{m} b_j\big)$ in $\mathcal{O}(n+m)$ time.
However, the convolution provides the $2n$ partial sums on the $45^{\circ}$-diagonals in almost the same time of $\mathcal{O}\left((n+m)\log (n+m)\right)$.

We will now extend this geometric interpretation firstly to triangles with a vertical cathetus and a horizontal cathetus, then to arbitrary triangles and lastly to convex polygons. In order to do this, we will divide the given polygon $P$ in polygons $P^+_p$ and $P^-_m$ such that for each integer point $(i,j)$ the equality
\[\#\{P^+_p| (i,j)\in P^+_p\}-\#\{P^-_m| (i,j)\in P^-_m\} = \begin{cases} 1 &\textup{if $(i,j)\in P$} \\ 0 &\textup{if $(i,j)\notin P$}\end{cases}\]
holds, and we define
\[(c_p)_k := \sum_{\substack{i+j=k\\ (i,j)\in P^+_p\cap \mathbb{Z}^2}} a_i b_j \textup{ and }(c_m)_k := -\sum_{\substack{i+j=k\\ (i,j)\in P^-_m\cap \mathbb{Z}^2}} a_i b_j\textup{.}\]
By construction, $c_k = \left(\sum (c_p)_k\right) + \left(\sum (c_m)_k\right)$ holds. However, if the edges and vertices of the polygons $P^+_p$ and $P^-_m$ contain integer-points, we need to carefully decide for every of these polygons, which edges and vertices are supposed to be included in the polygons and which are excluded from the polygons.

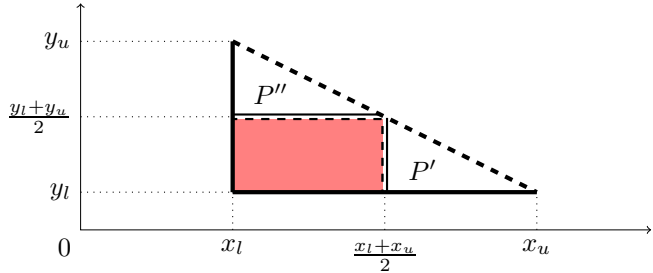
\begin{figure}[!ht]
	\centering
	\begin{tikzpicture}[scale=1]
	\draw [<->, very thin] (0,3) |- (7.5,0);

	\draw (0,0) node [below left] {$0$};
	\draw (0,0.5) node [left] {$y_{l}$};
	\draw (0,1.5) node [left] {$\frac{y_{l}+y_{u}}{2}$};
	\draw (0,2.5) node [left] {$y_{u}$};
	\draw (2,0) node [below] {$x_{l}$};
	\draw (4,0) node [below] {$\frac{x_{l}+x_{u}}{2}$};
	\draw (6,0) node [below] {$x_{u}$};
	\draw (2.5,1.8) node {$P''$};
	\draw (4.5,0.8) node {$P'$};

	\draw[draw=black, fill=red, dashed, thick, opacity=0.5] (2,0.5) rectangle (3.97,1.47);
    \draw[draw=black, thick, dashed] (2,0.5) rectangle (3.97,1.47);

	\draw [ultra thick] (2,0.5) -- (2,2.5);
	\draw [ultra thick, dashed] (2,2.5) -- (6,0.5);
	\draw [ultra thick] (6,0.5) -- (2,0.5);
	\draw [thick] (2,1.53) -- (3.94,1.53);
	\draw [thick] (4.03,1.485) -- (4.03,0.5);

	\draw[dotted] (0,0.5) -- (2,0.5);
	\draw[dotted] (0,1.5) -- (2,1.5);
	\draw[dotted] (0,2.5) -- (2,2.5);
	\draw[dotted] (2,0) -- (2,0.5);
	\draw[dotted] (4,0) -- (4,0.5);
	\draw[dotted] (6,0) -- (6,0.5);
	\end{tikzpicture}
	\caption{The right-angled triangle $P$ in Lemma \ref{lem:simpleTriangleAlg}.}
	\label{fig:simpleTriangleAlg}
\end{figure}

\begin{lemma} \label{lem:simpleTriangleAlg}
    Let $P$ be a triangle with a vertical cathetus and a horizontal cathetus and perimeter $p$. Let also the sequences $a=(a_0, a_1, a_2, \dots, a_n)$ and $b=(b_0, b_1, b_2, \dots, b_n)$ be given.

    Then the partial sums
    \[\displaystyle c_k = \sum_{\substack{i+j=k\\ (i,j)\in P\cap \mathbb{Z}^2}} a_i b_j\]
    can be calculated in $\mathcal{O}\left(p (\log p)^2\right)$ time.
\end{lemma}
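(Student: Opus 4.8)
The plan is to handle the right triangle by a divide-and-conquer recursion in which each step carves off one axis-aligned rectangle, computed by the ordinary rectangular convolution, and recurses on two smaller, similar triangles. Orient $P$ so that its right angle sits at the lower-left vertex $(x_l,y_l)$, with vertical cathetus up to $(x_l,y_u)$, horizontal cathetus to the right up to $(x_u,y_l)$, and hypotenuse from $(x_l,y_u)$ to $(x_u,y_l)$, as in \autoref{fig:simpleTriangleAlg}. Let $m_x=\frac{x_l+x_u}{2}$ and $m_y=\frac{y_l+y_u}{2}$; since $\frac{m_x-x_l}{x_u-x_l}+\frac{m_y-y_l}{y_u-y_l}=1$, the corner $(m_x,m_y)$ is precisely the midpoint of the hypotenuse and in particular lies on it.

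First I would split $P\cap\mathbb{Z}^2$ into three parts: the rectangle $R=[x_l,m_x]\times[y_l,m_y]$, which is inscribed in $P$ and touches the hypotenuse only at its far corner $(m_x,m_y)$; the right triangle $P'$ with catheti on $y=y_l$ and $x=m_x$; and the right triangle $P''$ with catheti on $x=x_l$ and $y=m_y$. Each of $P'$ and $P''$ is a copy of $P$ with both legs halved, so if $P$ has legs $w=x_u-x_l$ and $h=y_u-y_l$ and perimeter $p=w+h+\sqrt{w^2+h^2}$, then each subtriangle has perimeter exactly $p/2$ (up to rounding of $m_x,m_y$), while $R$ has perimeter $w+h\le p$.

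For the rectangle, the desired diagonal sums $\sum_{i+j=k,\,(i,j)\in R}a_ib_j$ are exactly a rectangular convolution of the slices $(a_{x_l},\dots,a_{m_x})$ and $(b_{y_l},\dots,b_{m_y})$, shifted by the offset $x_l+y_l$; by the rectangular case this costs $\mathcal{O}((w+h)\log(w+h))=\mathcal{O}(p\log p)$ time and produces $\mathcal{O}(p)$ nonzero diagonals, which I accumulate into a global output array $c$. Recursing on $P'$ and $P''$ and writing $T(p)$ for the time on a triangle of perimeter $p$, we obtain $T(p)\le 2\,T(p/2)+\mathcal{O}(p\log p)$. At recursion depth $i$ there are $2^i$ triangles of perimeter $\mathcal{O}(p/2^i)$, contributing $2^i\cdot\mathcal{O}((p/2^i)\log(p/2^i))=\mathcal{O}(p\log p)$ convolution work, and the depth is $\mathcal{O}(\log p)$ before the perimeter drops to a constant and the base case is evaluated directly, so $T(p)=\mathcal{O}(p(\log p)^2)$, as claimed.

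The step I expect to be the main obstacle is the lattice-point bookkeeping on the shared boundaries, needed so that $R$, $P'$ and $P''$ partition $P\cap\mathbb{Z}^2$ with no point double-counted or lost. The pieces overlap only on the top edge $y=m_y$ of $R$ (the bottom cathetus of $P''$), the right edge $x=m_x$ of $R$ (the left cathetus of $P'$), and the common corner $(m_x,m_y)$. Using the inclusion/exclusion convention fixed just before the lemma, I would assign each shared segment to exactly one incident piece --- for instance excluding the top and right edges from $R$ and including them in $P''$ and $P'$ respectively --- and assign $(m_x,m_y)$ to a single piece, so that the three integer-point sets become disjoint and their union equals $P\cap\mathbb{Z}^2$ under the boundary convention that $P$ inherited from its parent. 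Propagating this convention consistently through the recursion, and treating the case where $m_x$ or $m_y$ is non-integral by rounding, is routine but must be tracked carefully to keep the count exact.
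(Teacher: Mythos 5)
Your proposal is correct and follows essentially the same route as the paper: split the right triangle at the midpoint of the hypotenuse into an inscribed axis-aligned rectangle (handled by one rectangular convolution in $\mathcal{O}(p\log p)$ time) plus two similar subtriangles of perimeter $p/2$, fix a boundary convention so the three integer-point sets partition $P\cap\mathbb{Z}^2$, and solve the recurrence $T(p)\le 2T(p/2)+\mathcal{O}(p\log p)$ to get $\mathcal{O}(p(\log p)^2)$. The paper resolves the boundary bookkeeping you flag by defining the rectangle with coordinates at most $\lceil\cdot\rceil-1$ and the subtriangles with coordinates at least $\lceil\cdot\rceil$, which also disposes of the non-integral-midpoint case, but your edge-assignment convention is an equivalent fix.
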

\begin{proof}
    The proof will be symmetrical with regard to horizontal and vertical mirroring. Therefore, without loss of generality, we will assume that $P$ is oriented as in Figure \ref{fig:simpleTriangleAlg}.

    We first initialize the output vector $c=(c_{x_{l}+y_{l}}, c_{x_{l}+y_{l}+1}, c_{x_{l}+y_{l}+2}, \dots, c_{x_{u}+y_{u}})$ with zero. This takes $\mathcal{O}\left(p\right)$ time.

    In the following proof, we assume that both catheti are included in the polygon and that the hypotenuse as well as its endpoints are excluded. If this is not the expected behavior, we can traverse the edges in $\mathcal{O}\left(p\right)$ time and for each integer-point $(i,j)$ on the edge, we can decrease/increase the corresponding $c_{i+j}$ by $a_i b_j$ if necessary.

    If $p$ is at most one, there is at most one integer-point $(i,j)$ in the triangle, and this point can be found in constant time. In this case, we only have to increase $c_{i+j}$ by $a_i b_j$.

    If $p$ is bigger than one, we will separate the triangle $P$ into three disjoint parts as seen in Figure \ref{fig:simpleTriangleAlg}.
    \begin{itemize}
        \item The triangle $P'$ of points with x-coordinate of at least $\left\lceil\frac{x_{l}+x_{u}}{2}\right\rceil$,
        \item the triangle $P''$ of points with y-coordinate of at least $\left\lceil\frac{y_{l}+y_{u}}{2}\right\rceil$ and
        \item the red rectangle of points with x-coordinate of at most $\left\lceil\frac{x_{l}+x_{u}}{2}\right\rceil-1$ and y-coordinate of at most $\left\lceil\frac{y_{l}+y_{u}}{2}\right\rceil-1$.
    \end{itemize}

    There are no integers bigger than $\left\lceil\frac{x_{l}+x_{u}}{2}\right\rceil-1$ but smaller than $\left\lceil\frac{x_{l}+x_{u}}{2}\right\rceil$ nor integers bigger than $\left\lceil\frac{y_{l}+y_{u}}{2}\right\rceil-1$ but smaller than $\left\lceil\frac{y_{l}+y_{u}}{2}\right\rceil-1$. Therefore, each integer-point in $P$ is in exactly one of the three parts.

    For the red rectangle, we can calculate the convolution and thereby get the corresponding partial sums in $\mathcal{O}\left(p\log p\right)$ time. The partial sums corresponding to the sub-triangles are calculated recursively. Increasing the $c_k$ by the partial results leads to the final result.

    Hence, the algorithm takes \[\mathcal{O}\left(p+\left(\sum_{i=0}^{\log_2 p} 2^i\left(\frac{p}{2^i} \log \frac{p}{2^i}\right)\right) + 2^{\log_2 p}\right) \subseteq \mathcal{O}\left(\sum_{i=0}^{\log p} p \log p\right) = \mathcal{O}\left(p (\log p)^2\right)\] time.
\end{proof}

We will now further extend this result to arbitrary triangles:

\begin{figure}[!ht]
	\centering
	\begin{tikzpicture}[scale=1]
    \path[fill=blue, opacity=0.5] (1.5,0.5) -- (4.5,0.5) -- (4.5,2) -- (1.5,2) -- cycle;
    \path[fill=red, opacity=0.5] (0.5,0.5) -- (1.5,0.5) -- (1.5,2) -- cycle;
    \path[fill=red, opacity=0.5] (4.5,2) -- (4.5,2.5) -- (1.5,2) -- cycle;
    \path[fill=white, opacity=0.5] (0.5,0.5) -- (4.5,0.5) -- (4.5,2.5) -- cycle;
    \draw[thin] (1.5,0.5) -- (1.5,2) -- (4.5,2);

    \path[fill=red, opacity=0.5] (11.5,2.5) -- (11.5,1.5) -- (8.5,2.5) -- cycle;
    \path[fill=red, opacity=0.5] (7.5,0.5) -- (7.5,2.5) -- (8.5,2.5) -- cycle;
    \path[fill=red, opacity=0.5] (7.5,0.5) -- (11.5,1.5) -- (11.5,0.5) -- cycle;

    \draw [<->, very thin] (0,3) |- (5,0);
    \draw [<->, very thin] (7,3) |- (12,0);

    \draw (0,0) node [below left] {$0$};
    \draw (7,0) node [below left] {$0$};
    \draw (0,0.5) node [left] {$y_{l}$};
    \draw (7,0.5) node [left] {$y_{l}$};
    \draw (0,2.5) node [left] {$y_{u}$};
    \draw (7,2.5) node [left] {$y_{u}$};

    \draw (0.5,0) node [below] {$x_{l}$};
    \draw (7.5,0) node [below] {$x_{l}$};
    \draw (4.5,0) node [below] {$x_{u}$};
    \draw (11.5,0) node [below] {$x_{u}$};

    \draw[draw=black, dashed, thin] (0.5,0.5) rectangle (4.5,2.5);
    \draw[draw=black, dashed, thin] (7.5,0.5) rectangle (11.5,2.5);

    \draw [ultra thick, dashed] (0.5,0.5) -- (4.5,2.5);
    \draw [ultra thick, dashed] (0.5,0.5) -- (1.5,2);
    \draw [ultra thick, dashed] (1.5,2) -- (4.5,2.5);

    \draw [ultra thick, dashed] (7.5,0.5) -- (11.5,1.5);
    \draw [ultra thick, dashed] (7.5,0.5) -- (8.5,2.5);
    \draw [ultra thick, dashed] (8.5,2.5) -- (11.5,1.5);
	\end{tikzpicture}
	\caption{The two possible triangles $P$ in Lemma \ref{lem:difficultTriangleAlg}.}
	\label{fig:difficultTriangleAlg}
\end{figure}
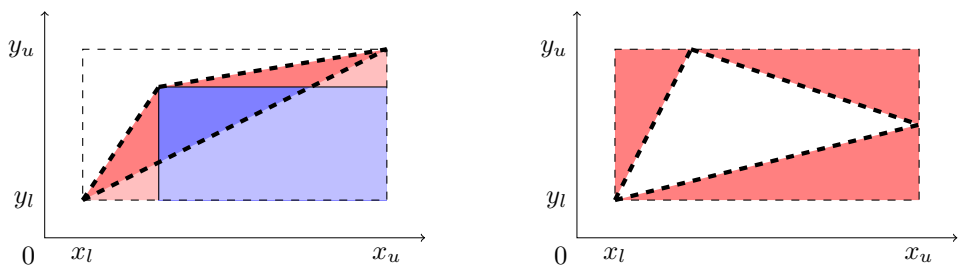

\begin{lemma} \label{lem:difficultTriangleAlg}
    Let a triangle $P$ with perimeter $p$ and sequences $a=(a_0, a_1, a_2, \dots, a_n)$ and $b=(b_0, b_1, b_2, \dots, b_n)$ be given.

    Then the partial sums
    \[\displaystyle c_k = \sum_{\substack{i+j=k\\ (i,j)\in P\cap \mathbb{Z}^2}} a_i b_j\]
    can be calculated in $\mathcal{O}\left(p (\log p)^2\right)$ time.
\end{lemma}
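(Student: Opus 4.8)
The plan is to reduce an arbitrary triangle to a signed combination of a constant number of right-angled triangles with axis-parallel catheti, and then to invoke Lemma~\ref{lem:simpleTriangleAlg} on each of them. Concretely, I would first compute from the three vertices the axis-aligned bounding box $[x_l,x_u]\times[y_l,y_u]$ of $P$ in $O(1)$ time, and then write the indicator function of $P\cap\mathbb{Z}^2$ as $\sum_t \varepsilon_t \mathbf{1}_{T_t}$ with signs $\varepsilon_t\in\{+1,-1\}$ and $O(1)$ right-angled, axis-parallel triangles $T_t$ (together with, in some configurations, the full bounding box, whose partial sums come directly from the factorization $(\sum a_i)(\sum b_j)$). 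Because the convolution-based partial sums are additive, computing $c_k$ for $P$ then amounts to adding and subtracting the vectors returned by Lemma~\ref{lem:simpleTriangleAlg}, one per $T_t$, after which we return $c$.

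For the reduction I would distinguish the configurations of Figure~\ref{fig:difficultTriangleAlg}. If one edge of $P$ is already vertical or horizontal, $P$ is of the required type up to a single inclusion-exclusion step. Otherwise, let the vertices have $x$-coordinates $x_A < x_B < x_C$; the vertical line $x=x_B$ through the median vertex meets the opposite edge $AC$ in a point $M$ and splits $P$ into two triangles, each having a vertical edge on the line $x=x_B$. It therefore suffices to decompose a triangle $Q$ with a vertical edge from $(x_0,y_a)$ to $(x_0,y_b)$ and apex $(x_1,y_1)$: the horizontal line $y=y_1$ through the apex turns $Q$ into the \emph{sum} of two axis-parallel right triangles when $y_a\le y_1\le y_b$, and into the \emph{difference} of two such triangles when $y_1$ lies outside $[y_a,y_b]$ (the larger triangle reaching from the apex's height to the far endpoint of the vertical edge, the smaller one being cut off). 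Hence $P$ is a signed sum of at most four triangles of the type handled by Lemma~\ref{lem:simpleTriangleAlg}, which is exactly the kind of dissection drawn in Figure~\ref{fig:difficultTriangleAlg}.

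The step I expect to be most delicate is the bookkeeping on the shared boundaries, so that every lattice point of $P$ is counted with net coefficient exactly $1$ and every point outside with $0$. The triangles $T_t$ share the vertical cut line $x=x_B$, the horizontal lines $y=y_1$, and—where a hypotenuse coincides with an edge of $P$—the edges of $P$ themselves, and a lattice point on such a shared segment would otherwise be double-counted or cancelled. I would handle this exactly as in the proof of Lemma~\ref{lem:simpleTriangleAlg}: fix a convention for which catheti, hypotenuses and vertices each $T_t$ includes, and then traverse the $O(p)$ lattice points on the cut lines and on the boundary of $P$, correcting the relevant $c_{i+j}$ by $\pm a_i b_j$ so that the signed multiplicities telescope to the indicator of $P$. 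This correction costs $O(p)$ time and is where the geometric case analysis (apex above, below, or level with the vertical edge; degenerate alignments; lattice points falling exactly on a cut line) must be carried out carefully.

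Finally, for the running time I would note that each $T_t$ lies inside the bounding box of $P$, whose width $w$ and height $h$ satisfy $2w\le p$ and $2h\le p$, since the boundary of the convex set $P$ must cross the full width and the full height and return. As the legs of each $T_t$ are bounded by $w$ and $h$ and its hypotenuse by $w+h$, every $T_t$ has perimeter $O(p)$. Hence the decomposition invokes Lemma~\ref{lem:simpleTriangleAlg} a constant number of times at cost $O(p(\log p)^2)$ each, plus $O(p)$ time for computing the bounding box, the direct rectangle sum, and the boundary corrections, giving $O(p(\log p)^2)$ in total.
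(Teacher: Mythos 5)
Your proof is correct, and it reaches the same complexity by the same key ingredients (a constant number of calls to Lemma~\ref{lem:simpleTriangleAlg}, an $\mathcal{O}(p)$ boundary correction, and the observation that the bounding box has width and height at most $p/2$), but the decomposition itself is genuinely different from the paper's. You triangulate directly: a vertical cut through the median-$x$ vertex yields two triangles sharing a vertical edge, and a horizontal cut through each apex turns each of these into a signed sum or difference of two axis-parallel right triangles, for at most four right triangles in total. The paper instead works by inclusion--exclusion against the axis-aligned bounding box $R$: it first argues that at least one vertex of $P$ must coincide with a corner of $R$ (four edges versus three vertices), and then distinguishes two cases --- either the opposite corner of $R$ is also a vertex of $P$, in which case $P$ is two red corner triangles plus a rectangle minus the triangle under the diagonal, or it is not, in which case $P$ is the rectangle minus three corner triangles. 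The paper's route gets the rectangle summand essentially for free and keeps the case analysis anchored to the bounding box, while your route avoids the corner-coincidence argument entirely and localizes the delicate boundary bookkeeping to two explicit cut lines; both are equally valid and give $\mathcal{O}\left(p(\log p)^2\right)$. Your honesty about the lattice-point bookkeeping on shared cut lines being the delicate step is well placed; the paper dispatches the same issue in one sentence by appealing to the edge/vertex correction device of Lemma~\ref{lem:simpleTriangleAlg}, exactly as you propose.
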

\begin{proof}
    Let $x_l, y_l, x_u, y_u$ be the minimal and maximal x-coordinates and y-coordinates of the three vertices of the polygon $P$.
    As in the last lemma, we first initialize the output vector $c = (c_{x_l+y_l}, c_{x_l+y_l+1}, c_{x_l+y_l+2}, \dots, c_{x_u+y_u})$.

    Similarly to the last lemma, we can remove/add edges and vertices in linear time with respect to $p$. Since the number of edges and vertices is constant, we ignore them for the sake of simplicity.

    Let $R$ be the rectangle $\{(x,y)| x_l<x<x_u \land y_l<y<y_u\}$. Since $R$ has four edges but $P$ only has three vertices, at least one of the vertices of $P$ is also a vertex of $R$. Without loss of generality, this vertex is $(x_l, y_l)$.

    \begin{description}
        \item[Case 1:] The opposing vertex $(x_u,y_u)$ in $R$ also coincides with a vertex of $P$ (as in the left hand side of Figure \ref{fig:difficultTriangleAlg}):

        Without loss of generality, we can assume that the third vertex of $P$ is above the diagonal from $(x_l,y_l)$ to $(x_u,y_u)$. In this case, the partial sums corresponding to $P$ are given by the sum of the partial sums of the red triangles and the partial sums of the blue rectangle minus the partial sums of the lighter triangle.

        There are only three triangles and one rectangle involved, and each of those polygons has perimeter $\mathcal{O}\left(p\right)$. Furthermore, all triangles have a vertical cathetus and a horizontal cathetus. Therefore, using Lemma \ref{lem:simpleTriangleAlg}, we can calculate all partial sums in $\mathcal{O}\left(p (\log p)^2\right)$ time.

        \item[Case 2:] The opposing vertex $(x_u,y_u)$ in $R$ does not coincide with a vertex of $P$ (as in the right hand side of Figure \ref{fig:difficultTriangleAlg}):

        In this case, one vertex of $P$ lies on the right edge of $R$ and one vertex of $P$ lies on the upper edge of $R$.

        The wanted partial sums are in this case the difference of the partial sums of the rectangle and of the partial sums of the three red triangles. Again, we can calculate all partial sums in $\mathcal{O}\left(p (\log p)^2\right)$ time.
    \end{description}

    Since both cases require $\mathcal{O}\left(p (\log p)^2\right)$ time, this concludes the proof.
\end{proof}

Now we will extend this algorithm to convex polygons by dissecting them into triangles with sufficiently small perimeter.

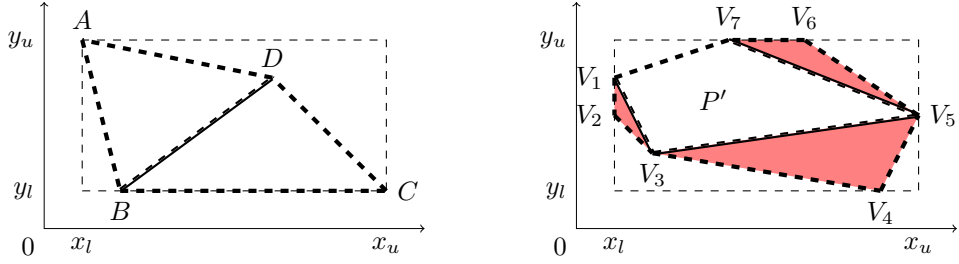
\begin{figure}[!ht]
    \centering
    \begin{tikzpicture}[scale=1]
    \path[fill=red, opacity=0.5] (7.5,2) -- (7.5,1.5) -- (8,1) -- cycle;
    \path[fill=red, opacity=0.5] (8,1) -- (11,0.5) -- (11.5,1.5) -- cycle;
    \path[fill=red, opacity=0.5] (11.5,1.5) -- (10,2.5) -- (9,2.5) -- cycle;

    \draw [<->, very thin] (0,3) |- (5,0);
    \draw [<->, very thin] (7,3) |- (12,0);

    \draw (0,0) node [below left] {$0$};
    \draw (7,0) node [below left] {$0$};
    \draw (0,0.5) node [left] {$y_{l}$};
    \draw (7,0.5) node [left] {$y_{l}$};
    \draw (0,2.5) node [left] {$y_{u}$};
    \draw (7,2.5) node [left] {$y_{u}$};

    \draw (0.5,0) node [below] {$x_{l}$};
    \draw (7.5,0) node [below] {$x_{l}$};
    \draw (4.5,0) node [below] {$x_{u}$};
    \draw (11.5,0) node [below] {$x_{u}$};

    \draw[draw=black, dashed, thin] (0.5,0.5) rectangle (4.5,2.5);
    \draw[draw=black, dashed, thin] (7.5,0.5) rectangle (11.5,2.5);

    \draw [ultra thick, dashed] (1,0.5) node [below] {$B$} -- (4.5,0.5);
    \draw [ultra thick, dashed] (4.5,0.5) node [right] {$C$} -- (3,2);
    \draw [ultra thick, dashed] (3,2) node [above] {$D$} -- (0.5,2.5);
    \draw [ultra thick, dashed] (0.5,2.5) node [above] {$A$} -- (1,0.5);

    \draw [thick] (1.013,0.5) -- (3.005,1.980);
    \draw [thick, dashed] (0.996,0.524) -- (2.988,2.005);

    \draw [ultra thick, dashed] (7.5,2) node [left] {$V_1$} -- (7.5,1.5);
    \draw [ultra thick, dashed] (7.5,1.5) node [left] {$V_2$} -- (8,1);
    \draw [ultra thick, dashed] (8,1) node [below] {$V_3$} -- (11,0.5);
    \draw [ultra thick, dashed] (11,0.5) node [below] {$V_4$} -- (11.5,1.5);
    \draw [ultra thick, dashed] (11.5,1.5) node [right] {$V_5$} -- (10,2.5);
    \draw [ultra thick, dashed] (10,2.5) node [above] {$V_6$} -- (9,2.5);
    \draw [ultra thick, dashed] (9,2.5) node [above] {$V_7$} -- (7.5,2);

    \draw [thick] (7.5,1.983) -- (7.987,1.005);
    \draw [thick, dashed] (7.525,2) -- (8.023,1);

    \draw [thick] (8.01,0.985) -- (11.51,1.485);
    \draw [thick, dashed] (8.01,1.015) -- (11.51,1.515);

    \draw [thick] (11.51,1.514) -- (9.01,2.5134);
    \draw [thick, dashed] (11.51,1.482) -- (9.01,2.482);

    \draw (8.8,1.7) node {$P'$};

    \end{tikzpicture}
    \caption{Two possible convex polygons $P$ with more than $3$ vertices in Lemma \ref{lem:convexAlg}.}
    \label{fig:convexAlg}
\end{figure}

\begin{theorem} \label{lem:convexAlg}
    Let $P$ be a convex polygon with $k$ vertices and perimeter $p$. Let also the sequences $a=(a_0, a_1, a_2, \dots, a_n)$ and $b=(b_0, b_1, b_2, \dots, b_n)$ be given.

    Then the partial sums
    \[\displaystyle c_k = \sum_{\substack{i+j=k\\ (i,j)\in P\cap \mathbb{Z}^2}} a_i b_j\]
    can be calculated in $\mathcal{O}\left(k + p (\log p)^2 \log k\right)$ time.
\end{theorem}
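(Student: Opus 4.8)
The plan is to decompose $P$ recursively by repeatedly \emph{cutting off every other corner}, halving the number of vertices at each step, and applying Lemma \ref{lem:difficultTriangleAlg} to the triangles that are cut off.

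Let $V_1, V_2, \dots, V_k$ be the vertices of $P$ in cyclic order. First I would form the \emph{inner polygon} $P'$ whose vertices are the odd-indexed $V_1, V_3, V_5, \dots$; since these form a cyclic subsequence of the vertices of a convex polygon, $P'$ is again convex, is contained in $P$, and has $\lceil k/2\rceil$ vertices. The region $P\setminus P'$ then splits into the \emph{ear triangles} $T_j = V_{2j-1}V_{2j}V_{2j+1}$, which are pairwise interior-disjoint and interior-disjoint from $P'$. Thus the partial sums for $P$ are obtained by adding the partial sums for $P'$ and for every ear $T_j$, so I would compute each $T_j$ using Lemma \ref{lem:difficultTriangleAlg} and recurse on $P'$, stopping once the current polygon is a triangle. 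As in the preceding lemmas, the integer points lying on the shared edges and vertices must be assigned to exactly one piece; traversing these edges and correcting the affected $c_{i+j}$ costs time proportional to the number of lattice points on them, which is linear in the perimeter.

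The crucial point — and the step I expect to be the main obstacle — is the perimeter accounting, since a careless decomposition (a fan triangulation from one vertex, or repeated bisection by a long diagonal) would accumulate diagonal length across the $\mathcal{O}(\log k)$ levels and yield a factor $k$ instead of $\log k$. Here two triangle-inequality estimates save the day. Each edge $V_{2j-1}V_{2j+1}$ of $P'$ is no longer than the two boundary edges $V_{2j-1}V_{2j}$ and $V_{2j}V_{2j+1}$ it shortcuts, so summing over all such edges gives $\operatorname{perim}(P') \le \operatorname{perim}(P) = p$; hence the perimeter never increases down the recursion. Moreover, the total perimeter of all ears at one level equals the sum of the original boundary edges (each used in exactly one ear) plus the third sides, each of which is an edge of $P'$, for a total of at most $p + \operatorname{perim}(P') \le 2p$. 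Consequently, applying Lemma \ref{lem:difficultTriangleAlg} to the ears of a single level costs $\mathcal{O}\bigl(p(\log p)^2\bigr)$ plus an $\mathcal{O}(1)$ overhead per ear.

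It then remains to sum over levels. The recursion has depth $\mathcal{O}(\log k)$ because the vertex count halves, and the perimeter entering every level is at most $p$, so the triangle computations over all levels cost $\mathcal{O}\bigl(p(\log p)^2\log k\bigr)$. The per-ear and per-vertex bookkeeping (reading the input, forming $P'$, and the constant overhead of Lemma \ref{lem:difficultTriangleAlg} on tiny ears) sums to $k + k/2 + k/4 + \dots = \mathcal{O}(k)$ across all levels, and initializing the output vector $c$, whose length is $\mathcal{O}(p)$, is likewise within budget. Adding these contributions yields the claimed $\mathcal{O}\bigl(k + p(\log p)^2\log k\bigr)$ running time. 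I would finally verify the degenerate situations — odd $k$ (where one edge of $P'$ coincides with an original edge), collinear vertices making an ear degenerate into a segment, and $k\le 4$ (where $P'$ degenerates to a single diagonal, recovering the ordinary two-triangle split shown on the left of Figure \ref{fig:convexAlg}) — but none of these affects the two perimeter bounds, which is all the analysis relies on.
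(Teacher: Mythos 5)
Your proposal follows exactly the paper's proof: the same every-other-vertex decomposition into an inner polygon $P'$ and ear triangles, the same two triangle-inequality bounds ($\operatorname{perim}(P')\le p$ and total ear perimeter $\le 2p$), and the same $\mathcal{O}(\log k)$-depth recursion summing to $\mathcal{O}\left(k+p(\log p)^2\log k\right)$. The argument is correct and essentially identical to the one in the paper.
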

\begin{proof}
    As in the last two Lemmata, we define $x_l, y_l, x_u, y_u$ to be the minimal and maximal x-coordinates and y-coordinates of the $k$ vertices of $P$. Also, we first initialize the output vector $c = (c_{x_l+y_l}, c_{x_l+y_l+1}, c_{x_l+y_l+2}, \dots, c_{x_u+y_u})$. We further assume that none of the edges and vertices of $P$ is included in $P$.

    If $P$ is a triangle, then this Lemma simplifies to Lemma \ref{lem:difficultTriangleAlg} and there is nothing left to prove.

    If $P$ is a quadrilateral $ABCD$, as in the left hand side of Figure \ref{fig:convexAlg}, then it can be partitioned into the triangles $ABD$ and $CDB$ where the edge $BD$ is included in exactly one triangle and all other edges are excluded. The triangle inequality proves that $|BD| \leq |DA|+|AB|$ and $|BD| \leq |BC|+|CD|$ hold. Therefore, both triangles have a perimeter of at most $p$. This implies that the partial sums can be calculated in $\mathcal{O}\left(p (\log p)^2\right)$

    If $P$ is a polygon $V_1 V_2 V_3 \dots V_k$ with more than four vertices, as in the right hand side of Figure \ref{fig:convexAlg}, it can be partitioned into
    \begin{itemize}
    	\item the polygon $P' = V_1 V_3 V_5 \dots V_{2\left\lceil\frac{k}{2}\right\rceil-1}$, which is given by the odd vertices without its edges,
    	\item the red triangles $V_i V_{i+1} V_{i+2}$ with $i = 1, 3, 5, \dots, 2\left\lceil\frac{k}{2}\right\rceil-3$ including the edge $V_i V_{i+2}$ but excluding the other edges and the vertices,
    	\item if $k$ is even, the triangle $V_{k-1} V_{k}$ including the edge $V_{k-1} V_{k+1}$ but excluding the other edges and the vertices.
    \end{itemize}
	By construction and triangle inequality, the perimeter $p'$ of $P'$ is at most $p$. This, however, also implies that the total perimeter $\sum p_i$ of the triangles is at most $2p$. The inequality
	\[\sum \min\left(1, p_i (\log p_i)^2\right) \leq k + \sum \left(p_i (\log p)^2\right) \leq k + p(\log p)^2\]
	implies that the algorithm needs $\mathcal{O}\left(k + p(\log p)^2\right)$ time plus the time we need for processing $P'$. Since each step almost halves the number of vertices, we need $\mathcal{O}\left(\log k\right)$ steps. This results in a total time complexity of $\mathcal{O}\left(k + p (\log p)^2 \log k\right)$.
\end{proof}

\section{(a,b,c)-Partial-k-Cadences}

In this section, we will show how the non-rectangular convolution helps counting the $(a,b,c)$-partial-$k$-cadences with a given character $\sigma$ in $\mathcal{O}\left(n (\log n)^2\right)$. We will further show that all $(a,b,c)$-partial-$k$-cadences can be counted in $\mathcal{O}\left(\min(|\Sigma|n(\log n)^2, n^{3/2}\log n)\right)$ time and that both counting algorithms allow to output $o$ of those partial-cadences in $\mathcal{O}\left(on\right)$ time.

As a special case, these results also hold for $3$-cadences.

We further conclude from these results that the existence of $k$-cadences with at most $k-3$ errors can be detected in $\mathcal{O}\left(\min(|\Sigma| k^3 n(\log n)^2, k^3 n^{3/2}\log n)\right)$ time.

Without loss of generality, we will only deal with the case $a<b$ in this section.

\begin{lemma}\label{lem:boundaries}
    Three positions $x$, $y$ and $z$ form a $(a,b,c)$-partial-$k$-cadence if and only if
    \begin{itemize}
    \item the equation $\frac{y-x}{b-a} = \frac{z-y}{c-b} \in \mathbb{Z}$ holds,
    \item the equation $S[x] = S[y] = S[z]$ holds and
    \item the inequalities
    \begin{align}
        0&\geq \frac{(b+1)x-(a+1)y}{b-a}\textup{,}\\
        0&<\frac{bx-ay}{b-a}\textup{,}\\
        n&\geq \frac{(b-k+1)x-(a-k+1)y}{b-a}\textup{ and}\\
        n&< i+kd = \frac{(b-k)x-(a-k)y}{b-a}\textup{ hold.}
    \end{align}
    \end{itemize}
\end{lemma}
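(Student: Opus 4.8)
The plan is to exploit the invertible linear change of variables relating the cadence parameters $(i,d)$ to the two outer positions, and then rewrite every defining condition of an $(a,b,c)$-partial-$k$-cadence as a condition on $x$, $y$ and $z$. Setting $x=i+ad$ and $y=i+bd$, the hypothesis $a<b$ makes the determinant $b-a$ of this system strictly positive, so $(i,d)\mapsto(x,y)$ is a bijection whose inverse, obtained by eliminating one variable at a time, is $d=\frac{y-x}{b-a}$ and $i=\frac{bx-ay}{b-a}=x-ad$. First I would record these two identities together with $z=i+cd$, since each of the three bulleted conditions is then verified by substitution.

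For the first bullet I would note that $y-x=(b-a)d$ and $z-y=(c-b)d$, so the equality $\frac{y-x}{b-a}=\frac{z-y}{c-b}$ holds exactly when $z$ is the $c$-th term of the progression fixed by $x$ and $y$, i.e.\ $z=i+cd$; the additional demand that this common value be an integer is precisely what forces $d$, and hence $i=x-ad$, to be integers, so that $(i,d)$ are legitimate integer parameters. The second bullet is then an immediate restatement of $S[i+ad]=S[i+bd]=S[i+cd]$ and needs no further work.

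The substance of the proof is the translation of the four inequalities. Substituting the formulas for $i$ and $d$ into the cadence constraints and simplifying yields
\begin{align*}
i-d &= \frac{(b+1)x-(a+1)y}{b-a}, & i &= \frac{bx-ay}{b-a},\\
i+(k-1)d &= \frac{(b-k+1)x-(a-k+1)y}{b-a}, & i+kd &= \frac{(b-k)x-(a-k)y}{b-a}.
\end{align*}
Thus inequality $(1)$ is the cadence condition $i-d\le 0$, inequality $(4)$ is $n<i+kd$, inequality $(2)$ encodes the positivity $i\ge 1$ (equivalent to $i>0$ because $i\in\mathbb{Z}$), and inequality $(3)$ is $i+(k-1)d\le n$. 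As noted in the discussion following the definition of a cadence, the pair $i+(k-1)d\le n<i+kd$ pins down $k=\lfloor\frac{n-i}{d}\rfloor+1$, so $k$ need not be hypothesised separately, and $d\ge i\ge 1$ follows from $(1)$ and $(2)$. Reading these equivalences in both directions establishes the characterisation.

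The step demanding care, and the main potential pitfall, is clearing the common denominator $b-a$ in all four inequalities: this is exactly where the standing assumption $a<b$ is used, since $b-a>0$ lets us multiply through without reversing any inequality, so that each equivalence holds in both directions. A secondary point to close off is that $z=i+cd$ really is one of the compared positions, i.e.\ $c\le k-1$; since $z$ is a position of $S$ we have $z\le n<i+kd$, and because $z=i+cd$ with $d>0$ this forces $c<k$, as required.
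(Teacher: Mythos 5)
Your proposal is correct and follows essentially the same route as the paper's proof: the change of variables $d=\frac{y-x}{b-a}$, $i=\frac{bx-ay}{b-a}$ and direct substitution showing the four inequalities are $i-d\leq 0$, $i>0$, $i+(k-1)d\leq n$ and $i+kd>n$. You spell out the integrality and the sign of $b-a$ more explicitly than the paper does, but the argument is the same.
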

\begin{proof}
    Define $d:=\frac{y-x}{b-a}$ and $i:=x-ad$. Then $x=i+ad$ and $y=i+bd$.
    Furthermore, the equation $\frac{y-x}{b-a} = \frac{z-y}{c-b}$ holds if and only if $z=i+cd$ and $\frac{y-x}{b-a}\in \mathbb{Z}$ holds if and only if $d$ is an integer.

    Additionally, using $x=i+ad$ and $y=i+bd$, the four inequalities can be simplified to $0\geq i-d$, $0<i$, $n\geq i+(k-1)d$ and $n<i+kd$.

    Therefore, the lemma follows from the definition of the partial-cadence.
\end{proof}

\begin{figure}[!ht]
    \centering
    \begin{tikzpicture}[scale=1]
    \draw [<->, very thin] (0,4.8) |- (4.8,0);

    \draw (4.8,0) node [right] {$x$};
    \draw (0,4.8) node [above] {$y$};
    \draw (0,0) node [below left] {$0$};
    \draw (5/4,0) node [below] {$\frac{1}{4} n$};
    \draw (10/4,0) node [below] {$\frac{2}{4} n$};
    \draw (0,10/4) node [left] {$\frac{2}{4} n$};
    \draw (0,15/4) node [left] {$\frac{3}{4} n$};

    \draw[thin] (0,0) -- (10/3,5);
    \draw[thin, dashed] (0,0) -- (2.5,5);
    \draw[thin] (0,2.5) -- (5,5);
    \draw[thin, dashed] (0,5/3) -- (5,5);

    \draw (10/3,5) node [right] {(1)};
    \draw (2.5,5) node [left] {(2)};
    \draw (0,2.6) node [above right] {(3)};
    \draw (-0.1,1.9) node [below right] {(4)};
    \draw (5,5) node [right] {$(n,n)$};

    \draw[very thick, dashed] (5/4,10/4) -- (2,3);
    \draw[very thick] (2,3) -- (10/4,15/4);
    \draw[very thick] (10/4,15/4) -- (5/3,10/3);
    \draw[very thick, dashed] (5/3,10/3) -- (5/4,10/4);

    \draw (5/4,10/4) node [below] {A};
    \draw (2,3) node [below right] {B};
    \draw (10/4,15/4) node [right] {C};
    \draw (5/3,10/3) node [above left] {D};

    \draw[dotted] (5/4,0) -- (5/4,10/4);
    \draw[dotted] (0,10/4) -- (5/4,10/4);
    \draw[dotted] (10/4,0) -- (10/4,15/4);
    \draw[dotted] (0,15/4) -- (10/4,15/4);
    \end{tikzpicture}
    \caption{The four inequalities of Lemma \ref{lem:boundaries} for $(1,2,3)$-partial-$4$-cadences.}
    \label{fig:quadrilateral}
\end{figure}
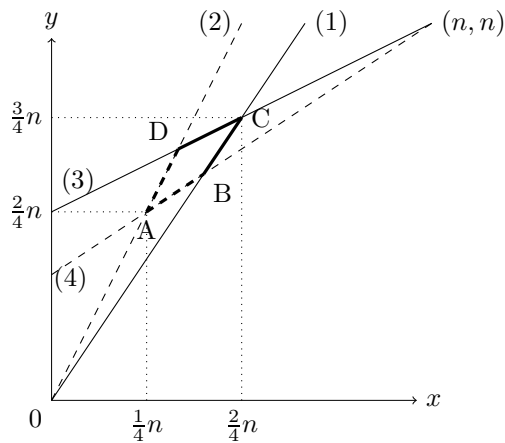

The four inequalities hold if the points $(x,y)$ lie inside the convex quadrilateral given, as shown in Figure \ref{fig:quadrilateral}, by the corners
\begin{align*}
A&=\left(\frac{an}{k},\frac{bn}{k}\right)\\
B&=\left(\frac{(a+1)n}{k+1},\frac{(b+1)n}{k+1}\right)\\
C&=\left(\frac{(a+1)n}{k},\frac{(b+1)n}{k}\right)\\
D&=\left(\frac{an}{k-1},\frac{bn}{k-1}\right)\\
\end{align*}
including the vertex $C$ and the edges between $B$ and $C$ as well as between $C$ and $D$ but excluding all other vertices and the edges between $A$ and $B$ as well as between $D$ and $A$.

For given $x=i+ad$ and $y=i+bd$, the third occurrence $z=i+cd$ can be calculated with the equation $i+cd = \frac{(b-c)(i+ad)+(c-a)(i+bd)}{b-a}$ directly without calculating $i$ and $d$ first. The corresponding partial sums
\[\displaystyle c_k = \sum_{\substack{i+j=k\\ (i,j)\in P\cap \mathbb{Z}^2}} a_{\frac {i} {(b-c)}} b_{\frac{j}{(c-a)}}\]
can be calculated by using the partial sums
\[\displaystyle c_k = \sum_{\substack{i+j=k\\ (i,j)\in P'\cap \mathbb{Z}^2}} a'_{i} b'_{j}\]
with $a'_i := \begin{cases} a_{\frac{i}{b-c}} &\textup{if $i \equiv 0 \pmod{b-c}$} \\ 0 &\textup{otherwise}\end{cases}\textup{ and } b'_j := \begin{cases} b_{\frac{j}{c-a}} &\textup{if $j \equiv 0 \pmod{c-a}$} \\ 0 &\textup{otherwise}\end{cases}$ and a polygon $P'$, which is derived from $P$ by stretching the first coordinate by $(b-c)$ and the second coordinate by $(c-a)$. The perimeter of $P'$ is at most $\max(|b-c|,|c-a|)$ times the perimeter of $P$.
Using the quadrilateral $P=ABCD$ with perimeter
\[p \leq 2|C_x-A_x|+2|C_y-A_y| = 2\left(\frac{(a+1)n}{k}-\frac{an}{k}\right) + 2\left(\frac{(b+1)n}{k}-\frac{bn}{k}\right) = \frac{4n}{k}\in \mathcal{O}\left(\frac{n}{k}\right)\textup{,}\]
the polygon $P'$ has perimeter $p'\in\mathcal{O}\left(n\right)$. This proves the following three theorems.

\begin{theorem}
	For every character $\sigma\in\Sigma$, the $(a,b,c)$-partial-$k$-cadences with $\sigma$ can be counted in $\mathcal{O}(n(\log n)^2)$ time. Also, if all $n_\sigma$ occurrences of $\sigma$ are known, the $(a,b,c)$-partial-$k$-cadences with $\sigma$ can be counted in $\mathcal{O}(n_\sigma^2)$ time.
\end{theorem}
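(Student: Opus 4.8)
The plan is to turn the characterisation of Lemma~\ref{lem:boundaries} into a single run of the non-rectangular convolution of Theorem~\ref{lem:convexAlg} over the quadrilateral $ABCD$ and then to read off the count. I specialise the input sequences to the indicator of $\sigma$, setting $a_x=\delta_x$ and $b_y=\delta_y$ (with $\delta_x=1$ exactly when $S[x]=\sigma$). Everything hinges on the identity $(b-a)z=(b-c)x+(c-a)y$ already derived above: after stretching the first coordinate by $(b-c)$ and the second by $(c-a)$, the third occurrence $z=w/(b-a)$ is constant along each convolution diagonal $u+v=w$. Thus one convolution groups all candidate pairs $(x,y)$ by their forced third position $z$, and the value $c_w$ counts the pairs with $S[x]=S[y]=\sigma$ that belong to $ABCD$ and induce that $z$.

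Concretely I would (i) build the sparse stretched sequences $a'$, $b'$ and the stretched quadrilateral $P'$ exactly as described before the theorem, adjusting the $\mathcal{O}(p')$ boundary edges and vertices so that the inclusion pattern of Lemma~\ref{lem:boundaries} (keep $C$ and the edges $BC$, $CD$; drop the rest) is respected; (ii) call Theorem~\ref{lem:convexAlg} on $P'$; (iii) add $c_w$ to a running total for every $w=(b-a)z$ with $S[z]=\sigma$. No symmetry correction is needed here, unlike in the $3$-sub-cadence case: the first and second occurrences enter through different stretched sequences, and the region $ABCD$ already forces $d\ge 1$, so the three positions are automatically distinct and each cadence is counted once. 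The cost is dominated by step (ii); since $ABCD$ has perimeter $\mathcal{O}(n/k)$, the stretched polygon $P'$ has perimeter $\mathcal{O}(n)$ and only four vertices, so Theorem~\ref{lem:convexAlg} runs in $\mathcal{O}(n(\log n)^2)$ time, which is the claimed bound.

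The step I expect to require the most care is ensuring that only genuine cadences are counted, i.e.\ configurations with $d=\frac{y-x}{b-a}\in\mathbb{Z}$ rather than merely those with $z\in\mathbb{Z}$. These two conditions coincide when $\gcd(b-a,c-a)=1$ (in particular for the running example $(1,2,3)$, where $b-a=1$), but in general an integer point $(x,y)\in ABCD$ on the line $(b-c)x+(c-a)y=(b-a)z$ can have integral $z$ while $d\notin\mathbb{Z}$, so filtering on ``$z$ integral and $S[z]=\sigma$'' alone would overcount. I would repair this by restricting the convolution to the index-$(b-a)$ sublattice $L=\{(x,y):(b-a)\mid(y-x)\}$, on which $d$ is automatically an integer: split according to the common residue $r$ of $x$ and $y$ modulo $b-a$, reindex each class by $x\mapsto(x-r)/(b-a)$ and $y\mapsto(y-r)/(b-a)$ so that its nonzero entries become dense again, and run Theorem~\ref{lem:convexAlg} on the resulting shrunken quadrilateral, whose induced third position is now $z=w'+r$. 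The delicate bookkeeping is to check that this compression shrinks each of the $b-a$ polygons by the same factor, so that their total perimeter stays $\mathcal{O}(n)$ and the overall running time remains $\mathcal{O}(n(\log n)^2)$ rather than ballooning by a factor of $b-a$.

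Finally, the ``rare character'' claim is the easy case: once all $n_\sigma$ occurrences of $\sigma$ are listed, I abandon the convolution and instead iterate over all $\mathcal{O}(n_\sigma^2)$ ordered pairs of occurrences, reading one as the first occurrence $x$ and the other as the third occurrence $z$, computing the forced middle $y=\frac{(c-b)x+(b-a)z}{c-a}$, and testing in constant time whether $y$ is an occurrence of $\sigma$ and whether the inequalities of Lemma~\ref{lem:boundaries} hold. This counts each cadence once and runs in $\mathcal{O}(n_\sigma^2)$ time, matching the corresponding bound from the $3$-sub-cadence section.
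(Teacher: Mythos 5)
Your proposal follows the paper's own route essentially step for step: specialise the sequences to the indicator of $\sigma$, use Lemma~\ref{lem:boundaries} to encode the boundary conditions as the quadrilateral $ABCD$, stretch the coordinates by $(b-c)$ and $(c-a)$ so that the forced third position $z=\frac{(b-c)x+(c-a)y}{b-a}$ is constant on each convolution diagonal, apply Theorem~\ref{lem:convexAlg} to the stretched quadrilateral of perimeter $\mathcal{O}(n)$, and sum $c_w$ over the diagonals $w=(b-a)z$ with $S[z]=\sigma$; the $\mathcal{O}(n_\sigma^2)$ case by enumerating pairs of occurrences is also exactly the paper's. The one place you genuinely go beyond the paper is the sublattice restriction enforcing $d=\frac{y-x}{b-a}\in\mathbb{Z}$: the paper stretches the full integer grid and never addresses integer pairs $(x,y)$ lying on a diagonal $w=(b-a)z$ with $z$ integral but $d$ fractional, and such pairs really do exist whenever $\gcd(b-a,c-a)>1$ (e.g.\ for $(a,b,c)=(0,4,6)$ every diagonal $-2x+6y=4z$ contains integer points with $d\in\frac{1}{2}+\mathbb{Z}$), so without your residue-class splitting the convolution overcounts. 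Your repair costs nothing asymptotically, since the $b-a$ shrunken copies have total perimeter $\mathcal{O}(n/k)$ before stretching, and it is vacuous precisely in the cases with $b-a=1$ (in particular $3$-cadences) that the paper emphasises. One small omission on your side: in the $\mathcal{O}(n_\sigma^2)$ enumeration over pairs $(x,z)$ you must also reject pairs with $\frac{z-x}{c-a}\notin\mathbb{Z}$, not only test $S[y]=\sigma$ and the four inequalities; this is a constant-time check and does not affect the bound.
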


\begin{theorem}
	The number of all $(a,b,c)$-partial-$k$-cadences can be counted in
	\[\mathcal{O}\left(\min(|\Sigma|n(\log n)^2, n^{3/2}\log n)\right)\textup{ time.}\]
\end{theorem}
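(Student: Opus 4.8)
The plan is to mirror the aggregation argument already used for the $3$-sub-cadence counting theorem, combining the two per-character bounds supplied by the preceding theorem. For each character $\sigma$ with $n_\sigma$ occurrences, that theorem counts its $(a,b,c)$-partial-$k$-cadences either in $\mathcal{O}(n(\log n)^2)$ time via the non-rectangular convolution or in $\mathcal{O}(n_\sigma^2)$ time once the occurrences of $\sigma$ are listed. Since every $(a,b,c)$-partial-$k$-cadence is associated with exactly one character, namely the common value $S[x]=S[y]=S[z]$, these per-character counts partition the set of all partial-cadences, so their sum is the desired total and it only remains to bound the running time.

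First I would obtain, for every character, the list of its occurrence positions. Following Amir et al., this can be done by sorting the input string in $\mathcal{O}(n\log n)$ time, and under the small-integer alphabet assumption even by bucketing in linear time; in either case this preprocessing is dominated by the bounds below and so contributes nothing to the final complexity.

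For the first bound I would simply apply the $\mathcal{O}(n(\log n)^2)$ convolution-based algorithm to each of the $|\Sigma|$ characters, giving $\mathcal{O}(|\Sigma|\,n(\log n)^2)$ in total. For the second bound I would choose for each $\sigma$ the cheaper of the two methods, so the work is $\sum_{\sigma}\min\big(n_\sigma^2,\,n(\log n)^2\big)$. Using the elementary inequality $\min(u,v)\le\sqrt{uv}$ with $u=n_\sigma^2$ and $v=n(\log n)^2$, each summand is at most $n_\sigma\sqrt{n(\log n)^2}=n_\sigma\,n^{1/2}\log n$; since $\sum_\sigma n_\sigma=n$, the total telescopes to $n^{1/2}\log n\cdot n=n^{3/2}\log n$. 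Taking whichever of the two strategies is faster for the given instance yields the claimed bound $\mathcal{O}\big(\min(|\Sigma|\,n(\log n)^2,\;n^{3/2}\log n)\big)$.

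The argument is essentially routine once the preceding per-character theorem is granted, so I do not expect a genuine obstacle; the only point demanding slight care is the balancing step, where one must confirm that both regimes contribute $\mathcal{O}(n^{3/2}\log n)$. The inequality $\min(u,v)\le\sqrt{uv}$ handles this uniformly and cleanly, avoiding any need to split the characters into frequent and rare classes by hand around the crossover frequency $n_\sigma=n^{1/2}\log n$.
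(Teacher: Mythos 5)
Your proposal is correct and follows essentially the same route as the paper: run the $\mathcal{O}(n(\log n)^2)$ convolution-based count or the $\mathcal{O}(n_\sigma^2)$ pairwise count per character (whichever is cheaper), after extracting occurrence lists by sorting, and bound $\sum_\sigma \min\left(n_\sigma^2, n(\log n)^2\right)$ by $n^{3/2}\log n$. The only cosmetic difference is that you balance via $\min(u,v)\le\sqrt{uv}$ where the paper uses the equivalent threshold computation $\frac{n}{(n(\log n)^2)^{1/2}}\cdot n(\log n)^2$; both yield the same bound.
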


\begin{theorem}
	After counting at least $o$ $(a,b,c)$-partial-$k$-cadences, it is possible to output $o$ $(a,b,c)$-partial-$k$-cadences in $\mathcal{O}(on)$ time.
\end{theorem}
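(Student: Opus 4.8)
The plan is to reuse, almost verbatim, the output strategy already described for $3$-sub-cadences, where for each index $k$ with $s_k\neq 0$ all the corresponding sub-cadences are recovered by a brute-force scan costing $\mathcal{O}(n)$. The only new ingredient is that here a partial-cadence is pinned down not by a middle occurrence but by its third occurrence $z=i+cd$, which is exactly the quantity the non-rectangular convolution uses as its output index (via $k=(b-a)z$). So the first thing I would do is keep, as a by-product of the counting, a table that stores for every position $z$ the number $\mathrm{cnt}(z)$ of genuine $(a,b,c)$-partial-$k$-cadences whose third occurrence equals $z$; by the correctness of the counting above (under either counting route, since the pairwise enumeration used for rare characters can increment $\mathrm{cnt}(z)$ just as well) this table is produced while running the counting, and each entry is then readable in $\mathcal{O}(1)$. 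A convenient feature is that the third occurrence fixes the character through $\sigma=S[z]$, so every position is an anchor for at most one character and there are only $\mathcal{O}(n)$ anchors in total across the whole alphabet.

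Next I would sweep the anchors $z=1,2,\dots,n$ and, whenever $\mathrm{cnt}(z)>0$, reconstruct all partial-cadences anchored at $z$ by brute force over the first occurrence: for each $x\in\{1,\dots,n\}$ I set $d=\frac{z-x}{c-a}$, discard it unless $d$ is a positive integer, put $y=x+(b-a)d$, and test $S[x]=S[y]=S[z]=\sigma$ together with the four boundary inequalities of Lemma~\ref{lem:boundaries}; every surviving pair yields the partial-cadence $(i,d)$ with $i=x-ad$. Because $x$ ranges over at most $n$ values this processes one anchor in $\mathcal{O}(n)$ time, and because it re-checks integrality of $d$ and all three character equalities it emits exactly the $\mathrm{cnt}(z)$ partial-cadences at $z$ and no spurious ones. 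I would append these to the output list and halt as soon as $o$ partial-cadences have been produced; the precondition that at least $o$ exist guarantees that the sweep does not run out before reaching this target.

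For the running time, the crucial observation is that the brute force is invoked only on anchors with $\mathrm{cnt}(z)>0$, each of which contributes at least one partial-cadence to the output, so it is invoked on at most $o$ anchors for a total of $\mathcal{O}(on)$; the sweep over the remaining, empty anchors costs only $\mathcal{O}(n)\subseteq\mathcal{O}(on)$. The step I expect to be the main obstacle is precisely the guarantee that a positive stored count forces at least one genuine partial-cadence at that anchor: this is what lets us dismiss an empty anchor in constant time instead of paying $\mathcal{O}(n)$ for it, and it is the reason the per-anchor counts---rather than merely the grand total returned by the counting---must be maintained. Establishing this reduces to the correctness of the counting procedure proved above, so no new difficulty arises beyond bookkeeping.
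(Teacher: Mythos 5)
Your proposal is correct and follows essentially the same route the paper intends: just as for $3$-sub-cadences, one keeps the per-anchor counts produced by the (non-rectangular) convolution, here indexed by the $c$-position $z$ via the output index $(b-a)z$, and spends $\mathcal{O}(n)$ brute-force reconstruction time only on anchors with a positive count, each of which yields at least one reported partial-cadence. The paper leaves this proof implicit (it only states the analogous argument explicitly in the sub-cadence section), and your write-up correctly identifies the one point that needs care, namely that per-position counts rather than just the grand total must be retained.
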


Since every $3$-cadence is an $(0,1,2)$-partial-$3$-cadence, we also obtain the special case:

\begin{corollary}
    For every character $\sigma\in\Sigma$, the $3$-cadences with $\sigma$ can be counted in $\mathcal{O}(n(\log n)^2)$ time. Also, if all $n_\sigma$ occurrences of $\sigma$ are known, the $3$-cadences with $\sigma$ can be counted in $\mathcal{O}(n_\sigma^2)$ time.

    Therefore, the number of all $3$-cadences can be counted in
    \[\mathcal{O}\left(\min(|\Sigma|n(\log n)^2, n^{3/2}\log n)\right)\textup{ time.}\]

    Also, after counting at least $o$ $3$-cadences, it is possible to output $o$ $3$-cadences in $\mathcal{O}(on)$ time.
\end{corollary}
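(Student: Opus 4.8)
The plan is to observe that all three parts of the corollary are immediate specializations of the three preceding theorems on $(a,b,c)$-partial-$k$-cadences, obtained by fixing $(a,b,c)=(0,1,2)$ and $k=3$; consequently the only thing that genuinely has to be checked is that the notion of a $3$-cadence coincides \emph{exactly} with that of a $(0,1,2)$-partial-$3$-cadence.

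First I would unfold both definitions side by side. A $3$-cadence is, by definition, a $3$-sub-cadence $(i,d,3)$ satisfying the two boundary inequalities $i-d\leq 0$ and $n<i+3d$; expanding the sub-cadence requirement, this amounts to
\[S[i]=S[i+d]=S[i+2d]\]
together with $i-d\leq 0$ and $n<i+3d$. On the other hand, substituting $\pi_1=0$, $\pi_2=1$, $\pi_3=2$ (so that $p=3$) and $k=3$ into the definition of a partial-cadence produces a triple $(i,d,3)$ with $i-d\leq 0$, $n<i+3d$ and $S[i]=S[i+d]=S[i+2d]$. These two conditions are literally identical, and $0,1,2\in\{0,1,2\}=\{0,1,\dots,k-1\}$, so the two notions describe the very same set of triples. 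In particular, counting $(0,1,2)$-partial-$3$-cadences counts precisely the $3$-cadences, with neither false positives nor omissions.

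With this identification established, each claim follows by invoking the corresponding theorem with $(a,b,c)=(0,1,2)$ and $k=3$. The per-character bounds $\mathcal{O}(n(\log n)^2)$ and, for known occurrences, $\mathcal{O}(n_\sigma^2)$ come from the first theorem; the aggregate bound $\mathcal{O}(\min(|\Sigma|n(\log n)^2,\,n^{3/2}\log n))$ comes from the second theorem (which already folds in the $\mathcal{O}(n\log n)$ sorting step used to list the occurrences of every character and the balancing of $n_\sigma^2$ against $n(\log n)^2$); and the output bound $\mathcal{O}(on)$ comes from the third theorem.

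Because the argument is a pure substitution, I do not anticipate a real obstacle. The one place meriting a second look is the half-open boundary convention on the quadrilateral $ABCD$ of Lemma \ref{lem:boundaries}: one should confirm that, after setting $a=0$, $b=1$, $c=2$, $k=3$, the inclusion of the vertex $C$ and the edges $BC$ and $CD$, and the exclusion of everything else, correctly encode the mixed strict/non-strict inequalities $i-d\leq 0$, $0<i$, $n\geq i+2d$, $n<i+3d$. Since this is exactly the edge- and vertex-bookkeeping already carried out for general $(a,b,c)$ and $k$, no additional work is required, and the corollary follows.
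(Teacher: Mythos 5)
Your proposal is correct and matches the paper's own argument, which likewise obtains the corollary purely by specializing the three preceding theorems to $(a,b,c)=(0,1,2)$ and $k=3$ after noting that $3$-cadences and $(0,1,2)$-partial-$3$-cadences coincide. Your explicit check that the two definitions agree in both directions (so the count has no false positives or omissions) is slightly more careful than the paper's one-line remark, but it is the same route.
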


Taking the sum over all possible triples $(a,b,c)$, we can also search for $k$-cadences with at most $k-3$ errors. It can be checked in
\[\mathcal{O}\left(\min(|\Sigma| k^3 n(\log n)^2, k^3 n^{3/2}\log n)\right)\]
time whether the given string has a $k$-cadence with at most $k-3$ errors.
However, since $k$-cadences with less than $k-3$ errors are counted more than once, it seems to be difficult to determine the exact number of $k$-cadences with at most $k-3$ errors.

\section{Conclusion}

This paper extends convolutions to arbitrary convex polygons. One might wonder whether these convolutions could be speed up or be further extended to non-convex polynomials.

Instead of just partitioning the interior of the polygon into triangles, it is also possible to identify polygons by the difference of a slightly bigger but less complex polygon and a triangle. However, if the algorithm presented in this paper is adapted to non-convex polygons, it can generate self-intersecting polygons. While the time-complexity stays the same for these polygons, it becomes hard to ensure that every vertex and every edge of the polygon is counted exactly once.

Another approach is given by Levcopoulos and Lingas in \cite{LevcopoulosAlgo}. This paper shows that any simple polygon can be decomposed into convex components in quasilinear time with only logarithmic blow-up. This paper also shows that if the input polygon is rectilinear, this partition only contains axis-aligned rectangles. Since the convolution handles rectangles quicker and more easily than triangles, this saves a logarithm. However, in general, it is not obvious how to transform arbitrary polygons into equivalent simple rectilinear polygons in quasilinear time without blowing-up the number of vertices too much.

The non-rectangular convolution, unlike the usual convolution, allows to define a dependence between the indices of the convoluted sequences. This dependence is not usable in applications like the multiplication of polynomials, and for many signal processing applications this extended method does not seem to bring any benefits either. However, in order to count the partial-cadences this dependence was essential. The non-rectangular convolution may also have future applications in image processing and convolutional neural networks.

In terms of cadences, this paper presents algorithms to count and find sub-cadences, cadences and partial-cadences with three elements. However, if there are linearly many $c$-positions of $(a,b,c)$-partial-$k$-cadences, the knowledge of those partial-cadences does not lead to a sub-quadratic-time-algorithm for determining the existence $4$-cadences. On the other hand, it is also not shown that this problem needs quadratic time.

Also, the time-complexity $\mathcal{O}\left(on\right)$ for finding $o$ $3$-cadences is quite pessimistic. If there are many $3$-cadences, it is very likely that quite a few of these $3$-cadences share one of their occurrences. These occurrences can be found in $\mathcal{O}(n)$ time. On the other hand, in the string $10^{n-1}1^{2n}$, for example, there are linearly many $3$-cadences but every second occurrence and every third occurrence only occurs in at most one of those $3$-cadences.

\section{acknowledgements}
	The first author discovered an error in the algorithm for determining the existence of 3-cadences in "String cadences" of Amir et al., which led to false-positives. Travis Gagie explained this error to the second author at the CPM-Conference in Pisa. He also claimed that this problem should be solvable. Juliusz Straszyński showed that $3$-sub-cadences beginning and ending in given intervals can efficiently be detected by convolution. Amihood Amir noted that we can also efficiently count these sub-cadences, which allows ``subtractive'' methods as used for arbitrary triangles.

\appendix

\section{appendix}
\subsection{Convolutions} \label{convolutions}

It is well-known that the discrete convolution can be calculated with $\mathcal{O}\left(n\log n\right)$ complex arithmetic operations.
However, if the convolution is calculated with the fast Fourier transform, the finite register lengths introduce roundoff errors. These errors can propagate and accumulate throughout the calculation.

Therefore, in order to calculate the convolution of integer sequences, it seems more convenient to use the number theoretic transform, which is the generalization of the fast Fourier transform from the field of the complex numbers to certain residue class rings.

In this section, we will show that after some precomputation in $\mathcal{O}\left(n(\log n)^2 (\log\log n)\right)$ time it is possible to calculate these convolutions in $\mathcal{O}\left(n\log n\right)$ time.

Agarwal and Burrus show in \cite{AgarwalNTTandFermat} that the cyclic convolution of two integer-vectors of length $n$ can be efficiently computed modulo a prime $p$ if $p-1$ is a multiple of $n$.

Linnik proves in \cite{Linnik} that there are constants $c$ and $L$ such that for each $n$, $r$ with $\gcd(n,r)=1$, there is a prime of the form $mn+r$ with $mn+r < cn^L$.
While Linnik himself did not provide the values of $c$ and $L$, there are some upper bounds:
For example, Xylouris proves in \cite{XylourisOld} that there is a $c$ such that for each $n$, $r$ with $\gcd(n,r)=1$, there is a prime of the form $mn+r$ with $mn+r < cn^{5.18}$.
More explicitly, Bach and Sorenson present in \cite{BachLinnikGRH} that if the generalized Riemann hypothesis holds, for each $n$, $r$ with $\gcd(n,r)=1$, there is a prime of the form $mn+r$ with $mn+r < 2(n\log n)^2$.

As a result, for each $n$, there is a prime $p_n \equiv 1 \pmod{n}$ with $p_n  < 2(n\log n)^2$. This also implies that the length of $p_n$ is at most $4$ times the length of $n$. Therefore, such a prime number $p_n$ is a good modulus for the convolution of length $n$ or any of its divisors. It is left to show that such a prime $p_n$ can be efficiently found.

\begin{theorem}
    Let $n$ be an integer. A prime $p_n \equiv 1 \pmod{n}$ with $p_n  < 2(n\log n)^2$ can be found in $\mathcal{O}(n(\log n)^2 \log\log (n))$ time.
\end{theorem}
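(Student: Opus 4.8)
The plan is to combine the existence result cited above with a sieve restricted to the single arithmetic progression $mn+1$. Since Bach and Sorenson guarantee (under GRH) a prime $\equiv 1 \pmod n$ below $N := 2(n\log n)^2$, it suffices to decide primality for the $M := \lfloor (N-1)/n\rfloor = \mathcal{O}(n(\log n)^2)$ candidates $x_m := mn+1$ with $1 \le x_m \le N$ and to return the first prime found. A candidate $x_m \le N$ is composite precisely when it has a prime factor at most $\sqrt N = \sqrt 2\, n\log n$, so the divisors that matter have size $\mathcal{O}(n\log n)$; in particular they fit into $\mathcal{O}(\log n)$ bits and obey the constant-time arithmetic assumption of the paper.

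First I would run an ordinary Sieve of Eratosthenes up to $\sqrt N = \mathcal{O}(n\log n)$, which produces the list of all primes $q \le \sqrt N$ in $\mathcal{O}(\sqrt N \log\log\sqrt N) = \mathcal{O}(n\log n\,\log\log n)$ time; by the prime number theorem there are $\pi(\sqrt N) = \mathcal{O}(n)$ of them. I would then allocate a Boolean array indexed by $m \in \{1,\dots,M\}$, costing $\mathcal{O}(M) = \mathcal{O}(n(\log n)^2)$ time, and cross off the composite candidates progression-wise: for each prime $q \le \sqrt N$ with $\gcd(q,n)=1$ the congruence $mn+1 \equiv 0 \pmod q$ has the single solution $m \equiv -n^{-1} \pmod q$, so the candidates divisible by $q$ form one arithmetic progression of indices with common difference $q$. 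Computing its starting index costs one modular inverse per prime, i.e.\ $\mathcal{O}(n\log n)$ in total, after which the multiples are crossed off in $\mathcal{O}(M/q+1)$ steps. Primes $q$ with $q \mid n$ divide no candidate and are skipped, and a prime candidate that happens to equal some $q \le \sqrt N$ must not be crossed off as its own multiple — a standard segmented-sieve detail that does not affect the running time.

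The decisive estimate is the total cost of the crossing-off phase, $\sum_{q\le\sqrt N}\left(\frac{M}{q}+1\right)$. The $+1$ terms contribute $\pi(\sqrt N) = \mathcal{O}(n)$, which is negligible. For the main term I would invoke Mertens' theorem, $\sum_{q\le x}\frac{1}{q} = \ln\ln x + \mathcal{O}(1)$, which yields $\sum_{q\le\sqrt N}\frac{1}{q} = \mathcal{O}(\log\log n)$ because $\ln\ln\!\big(\sqrt2\,n\log n\big) = \ln\ln n + \mathcal{O}(1)$. Hence the crossing-off costs $\mathcal{O}(M\log\log n) = \mathcal{O}\big(n(\log n)^2\log\log n\big)$, dominating the sieve of small primes, the array initialization, and the inverse computations. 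Scanning the array for the first unmarked index $m$ — one exists by the cited bound — then returns a prime $p_n = mn+1 \equiv 1 \pmod n$ with $p_n < N$ within the same budget.

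The only place where real care is needed is the restriction of the sieve to the single residue class $mn+1$: this is exactly what replaces a naive per-candidate trial division (which would use $\pi(\sqrt N)$ divisions per candidate, hence $\Theta(n^2(\log n)^2)$ overall) by the harmonic-over-primes sum controlled by Mertens' theorem. The remaining subtleties are the off-by-one handling of small prime candidates and the routine verification that every quantity manipulated stays within $\mathcal{O}(\log n)$ bits, so that the paper's constant-time arithmetic model applies throughout.
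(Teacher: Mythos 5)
Your proposal is correct and follows essentially the same strategy as the paper: sieve the small primes up to roughly $n\log n$ with Eratosthenes, then sieve only the arithmetic progression $mn+1$ up to $2(n\log n)^2$ using a modular inverse to locate each prime's starting point, and bound the total marking cost by Mertens' theorem as $\mathcal{O}\left(n(\log n)^2\log\log n\right)$. The only differences are cosmetic (indexing the array by $m$ rather than by the candidate values, and spelling out the $\gcd(q,n)=1$ and self-crossing-off details that the paper leaves implicit).
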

\begin{proof}
    The main idea is to use the sieve of Eratosthenes to first find all primes up to $2n\log n$ and then sieve only the numbers up to $2(n\log n)^2$ that are congruent to $1$ modulo $n$ with these primes.

    On the one hand, since $(2n\log n)^2 > 2(n\log n)^2$ holds, all numbers left after the second sieving are primes.
    On the other hand, the result of Bach and Sorenson in \cite{BachLinnikGRH} guarantees that if the generalized Riemann hypothesis holds, there is a prime left.
    Also, by construction, all primes $p_n$ left fulfill this theorem.

    It remains to be shown that this algorithm can be done in $\mathcal{O}(n(\log n)^2 \log\log(n))$ time.

    For the usual sieve of Eratosthenes, one prepares a Boolean array for the first $2n\log n$ numbers. Then, for each number that has not been marked as non-prime, every multiple is marked as non-prime. Afterwards, all non-marked numbers are returned. The majority of the time is spend for the marking. This takes
    \[\mathcal{O}\left(\sum_{\substack{p=2\\ p\textup{ is prime}}}^{2n\log n} \frac{2n\log n}{p} \right) = \mathcal{O}\left(n\log n \sum_{\substack{p=2\\ p\textup{ is prime}}}^{2n\log n} \frac{1}{p} \right) = \mathcal{O}\left(n(\log n) (\log\log n)\right)\]
    time. The last equality is given by Mertens in \cite[p.~46]{Mertens} (written in German) and the inequality $\log\log (2n\log n) < 2 \log\log (n)$.

    For the second part, we have a much larger interval of numbers. However, since we only have to consider the first residue class, only every $n$-th number has to be considered. Therefore we need
    \[\mathcal{O}\left(\sum_{\substack{p=2\\ p\textup{ is prime}}}^{2n\log n} \frac{2(n\log n)^2}{np} \right) = \mathcal{O}\left(n(\log n)^2 \sum_{\substack{p=2\\ p\textup{ is prime}}}^{2n\log n} \frac{1}{p} \right) = \mathcal{O}\left(n(\log n)^2 (\log\log n)\right)\]
    markings. Using the extended Euclidean algorithm, for every prime $p$, we can find the smallest $f$ such that $fp \equiv 1 \pmod{n}$ in $\mathcal{O}\left(\log p\right) \subseteq \mathcal{O}\left(\log n\right)$ time. Summing up over all primes, this takes
    \[\mathcal{O}\left(\sum_{\substack{p=2\\ p\textup{ is prime}}}^{2n\log n} \log n \right) \subseteq \mathcal{O}\left(n(\log n)^2\right)\]
    time.

    This concludes the proof.
\end{proof}

\begin{remark}
    The prime number theorem states that the number $\pi(N)$ of primes smaller than $N$ asymptotically behaves like $\frac{N}{\log N}$.
    Dirichlet's prime number theorem states that for a given $n$ and a sufficiently large $N$, the prime numbers are evenly distributed in all residue classes $mn+r$ with $\gcd(n,r)=1$.

    Therefore, for a given $n$ and sufficiently large $N$, we should expect circa $\frac{N}{\varphi(n)\log N }$ prime numbers of the form $mn+1$ that are smaller than $N$. One might therefore hope that it is possible to guess logarithmically many numbers smaller than $N$ in the right residue class, and then test in $\mathcal{O}\left((\log N)^c \right)$ time whether this number is prime.

    However, the ``sufficient largeness'' of $N$ depends on $n$. Therefore, these theorems do not provide the number of suitable primes smaller than, for example, $2(n\log n)^2$. Also, since the generation of suitable primes can be done in quasilinear time, the randomized shortcut is not necessary.
\end{remark}

It is not only possible to find a suitable modulus for the number theoretic transform, but we can also find a suitable $2^t$-th root:

\begin{theorem}
    Let $p_{2^t}$ be a prime with $p_{2^t} \equiv 1 \pmod{2^t}$ and $p_{2^t} < 2({2^t}\log({2^t}))^2$.

    A $2^t$-th root of unity modulo $p_{2^t}$ can be found in $\mathcal{O}\left((\log p_{2^t})^3\right)$ time.
\end{theorem}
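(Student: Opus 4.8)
The plan is to reduce the problem to finding a single quadratic non-residue modulo $p := p_{2^t}$. The relevant algebraic fact is that $(\mathbb{Z}/p\mathbb{Z})^*$ is cyclic of order $p-1$, and since the hypothesis guarantees $2^t \mid p-1$, this group contains an element of order exactly $2^t$, i.e.\ a primitive $2^t$-th root of unity. First I would prove the reduction: if $g$ is any quadratic non-residue modulo $p$, then $\omega := g^{(p-1)/2^t} \bmod p$ is such a primitive root. Indeed, $\omega^{2^t} = g^{p-1} \equiv 1 \pmod{p}$, so the order of $\omega$ divides $2^t$; and $\omega^{2^{t-1}} = g^{(p-1)/2} \equiv -1 \not\equiv 1 \pmod{p}$ by Euler's criterion, precisely because $g$ is a non-residue. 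Hence the order of $\omega$ does not divide $2^{t-1}$, and as it divides $2^t$ it must equal $2^t$. Thus producing a $2^t$-th root of unity amounts to producing a non-residue and performing one modular exponentiation.

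The search itself I would carry out deterministically by testing $g = 2, 3, 4, \dots$ in order, where each candidate is tested with Euler's criterion: compute $g^{(p-1)/2}\bmod p$ by square-and-multiply and check whether it equals $-1 \equiv p-1$. Under the paper's convention that arithmetic on $\mathcal{O}(\log p)$-bit operands costs constant time, one such exponentiation uses $\mathcal{O}(\log p)$ modular multiplications and therefore runs in $\mathcal{O}(\log p)$ time. The number of candidates that must be examined before the first non-residue is bounded by the size of the \emph{least} quadratic non-residue modulo $p$. Here I would invoke Ankeny's classical conditional result that, assuming the generalized Riemann hypothesis, the least quadratic non-residue modulo $p$ is $\mathcal{O}\bigl((\log p)^2\bigr)$; this is consistent with the GRH assumption already used to produce the prime $p_{2^t}$ in the preceding theorem. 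Consequently at most $\mathcal{O}\bigl((\log p)^2\bigr)$ candidates are tested, each at cost $\mathcal{O}(\log p)$, giving $\mathcal{O}\bigl((\log p)^3\bigr)$ time for the search. The final exponentiation $g^{(p-1)/2^t}\bmod p$ adds only a further $\mathcal{O}(\log p)$, so the total is $\mathcal{O}\bigl((\log p_{2^t})^3\bigr)$ as claimed.

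I expect the main obstacle to lie not in the algorithm, which is short, but in the running-time justification: the deterministic bound on the search hinges entirely on the conditional estimate for the least quadratic non-residue, so I would be careful to state that the proof is conditional on GRH and to confirm that this is exactly the hypothesis under which $p_{2^t}$ itself was generated. I would also note for robustness that, should one prefer not to assume constant-time modular multiplication, each residuosity test can instead be evaluated as a Legendre symbol via quadratic reciprocity (a Euclid-type computation), which keeps the per-candidate cost low enough that the overall search still fits within $\mathcal{O}\bigl((\log p)^3\bigr)$. Finally, the correctness of the reduction depends only on $2^t \mid p-1$, which is precisely the arithmetic condition imposed on $p_{2^t}$, so no further divisibility bookkeeping is needed.
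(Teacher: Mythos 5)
Your proposal is correct and follows essentially the same route as the paper: both find a quadratic non-residue by testing the first $\mathcal{O}\left((\log p_{2^t})^2\right)$ candidates under Ankeny's GRH bound, at cost $\mathcal{O}\left(\log p_{2^t}\right)$ per exponentiation, and then obtain the $2^t$-th root of unity as $g^{(p-1)/2^t}$ (the paper writes this as $\left(q^o\right)^{2^{r-t}}$ with $p-1=o2^r$, which is the same element). The only cosmetic difference is that you verify non-residuosity directly via Euler's criterion, whereas the paper argues through a primitive root; both are valid and cost the same.
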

\begin{proof}
    Let $p_{2^t}=1+o2^r$ for an odd number $o$.

    Firstly, we will show that a residue $q^o$ is a $2^r$-th root of unity modulo $p_{2^t}$ if and only if $q$ is a quadratic nonresidue modulo $p_{2^t}$.

    Since $p_{2^t}$ is prime, there is a primitive root $a$ modulo $p_{2^t}$.

    Let $q\equiv a^i$. Then $q^o=a^{io}$ has the order $\frac{o2^r}{\gcd(io,o2^r)}=\frac{2^r}{\gcd(i,2^r)}$. Therefore, $q^o$ has order $2^r$ if and only if $i$ is odd. On the other hand, if $i$ is even, then $q$ is a quadratic residue, and if $i$ is odd, then $q\equiv a^i=a \left(a^{\frac{i-1}{2}}\right)^2$ is a quadratic nonresidue. This implies that $q^o$ is a $2^r$-th root of unity modulo $p_{2^t}$ if and only if $q$ is a quadratic nonresidue modulo $p_{2^t}$.

    Ankeny shows in \cite{AnkenyNonResidue} that if the generalized Riemann hypothesis holds, there is a quadratic nonresidue in the first $\mathcal{O}\left((\log p_{2^t})^2\right)$ residue classes. For any residue $q$ it can be tested with $\mathcal{O}\left(\log p_{2^t}\right)$ multiplications and modulo operations whether $q^o$ has order $2^r$. As byproduct we get $\left(q^o\right)^{\left(2^{r-t}\right)}$. If and only if $q^o$ has order $2^r$, the power $\left(q^o\right)^{\left(2^{r-t}\right)}$ has order $2^t$.

    Therefore, a $2^t$-th root of unity modulo $p_{2^t}$ can be found in $\mathcal{O}\left((\log p_{2^t})^3\right)$ time.
\end{proof}

Therefore, we can efficiently compute the integer-convolution with the help of the number theoretic transform.

\begin{theorem}
    For a given integer $N$, we can find a modulus $p_N$ and a suitable root $q_N$ in $\mathcal{O}\left(N(\log N)^2 (\log\log N)\right)$ time such that it is possible to calculate the acyclic convolution modulo $p_N$ of two sequences of length $n\leq N$ in $\mathcal{O}\left(n\log n\right)$ time afterwards.
\end{theorem}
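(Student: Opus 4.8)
The plan is to invoke the two preceding precomputation theorems exactly once, for a single transform length that is a power of two and large enough to accommodate every $n \leq N$, and then to reuse that one modulus and one root for all shorter convolutions. Concretely, I would set $t := \lceil \log_2(2N) \rceil$, so that $2^t \geq 2N$ and $2^t \in \mathcal{O}(N)$. Applying the first theorem of this appendix with the length $2^t$ produces a prime $p_N := p_{2^t} \equiv 1 \pmod{2^t}$ with $p_{2^t} < 2(2^t \log 2^t)^2$ in $\mathcal{O}(2^t (\log 2^t)^2 \log\log 2^t) = \mathcal{O}(N (\log N)^2 \log\log N)$ time. Applying the second theorem then yields a $2^t$-th root of unity $q_N := q_{2^t}$ modulo $p_{2^t}$ in $\mathcal{O}((\log p_{2^t})^3) = \mathcal{O}((\log N)^3)$ time, which is dominated by the cost of finding the prime. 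Thus the combined precomputation stays within the claimed $\mathcal{O}(N (\log N)^2 \log\log N)$ bound.

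The key point for the query phase is that this single modulus and root suffice for all lengths simultaneously. Given sequences of length $n \leq N$, I would pick the transform length $2^s$ with $s := \lceil \log_2(2n) \rceil \leq t$. Since $2^s$ divides $2^t$ and $p_{2^t} \equiv 1 \pmod{2^t}$, we have $p_{2^t} \equiv 1 \pmod{2^s}$, so by the result of Agarwal and Burrus the cyclic convolution of length $2^s$ can be carried out modulo $p_{2^t}$. Moreover $q_N^{2^{t-s}}$ has order $2^t / \gcd(2^t, 2^{t-s}) = 2^s$ and is therefore a primitive $2^s$-th root of unity modulo $p_{2^t}$, so no new root needs to be computed; the required twiddle factors are simply powers of the single precomputed $q_N$.

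To obtain the acyclic convolution rather than the cyclic one, I would zero-pad both input sequences to length $2^s$. Because $2^s \geq 2n > 2n-1$, there is no wraparound, so the first $2n-1$ entries of the length-$2^s$ cyclic convolution are exactly the acyclic convolution $c_0, c_1, \dots, c_{2n-2}$. Running the number theoretic transform, the pointwise multiplication, and the inverse transform at length $2^s \in \Theta(n)$ costs $\mathcal{O}(2^s \log 2^s) = \mathcal{O}(n \log n)$ time, as claimed. The main obstacle --- and the reason the single large precomputation does not force an $\mathcal{O}(N \log N)$ cost per query --- is precisely this reuse argument: one must check that a modulus and root tailored to length $2^t$ remain valid for every divisor length $2^s$, which hinges on the divisibility $2^s \mid 2^t$ propagating both the congruence $p_{2^t} \equiv 1 \pmod{2^s}$ and the root-of-unity property through the power $q_N^{2^{t-s}}$. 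Everything else is the standard padding-and-transform routine.
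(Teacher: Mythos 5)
Your proposal is correct and follows essentially the same route as the paper: precompute one prime $p_N$ and one primitive root for a single power-of-two length $2^t\geq 2N$, then reuse $q_N^{2^{t-s}}$ as the root for every smaller dyadic length $2^s$ dividing $2^t$, with zero-padding giving the acyclic convolution. Your write-up is in fact slightly more explicit than the paper's about why the congruence and the root-of-unity property descend to divisor lengths and why no wraparound occurs.
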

\begin{proof}
    The acyclic convolution of sequences of length $n$ can be derived from a cyclic convolution of sequences with lengths of at least $2n$. Therefore, it is sufficient to prepare $2^T$ with $2N\leq 2^T < 4N$.

    For this length, the last two theorems state that a suitable modulus $p_N$ and a suitable $2^T$-th root $q_N$ of unity can be found in $\mathcal{O}\left(N(\log N)^2 (\log\log N)\right)$.

    Afterwards, for every $n\leq N$ we can append zeros to get the length $2^t$ with $2n\leq 2^t < 4n$. Since $2^t$ is a divisor of $2^T$, we can use $\left(q_N\right)^{\left(2^{T-t}\right)}$ as $2^t$-th root of unity.

    This allows the calculation of the acyclic convolution modulo $p_N$ in $\mathcal{O}\left(n\log n\right)$ time.
\end{proof}

\newpage

\bibliography{cadences}

\begin{thebibliography}{10}

\bibitem{AgarwalNTTandFermat}
R.~C. Agarwal and C.~S. Burrus.
\newblock Number theoretic transforms to implement fast digital convolution.
\newblock {\em Proceedings of the IEEE}, 63(4):550--560, April 1975.

\bibitem{Cadences_Amir}
Amihood Amir, Alberto Apostolico, Travis Gagie, and Gad~M. Landau.
\newblock String cadences.
\newblock {\em Theoretical Computer Science}, 698:4--8, 2017.
\newblock Algorithms, Strings and Theoretical Approaches in the Big Data Era
  (In Honor of the 60th Birthday of Professor Raffaele Giancarlo).

\bibitem{AnkenyNonResidue}
N.~C. Ankeny.
\newblock The least quadratic non residue.
\newblock {\em Annals of Mathematics}, 55(1):65--72, 1952.

\bibitem{BachLinnikGRH}
Eric Bach and Jonathan Sorenson.
\newblock Explicit bounds for primes in residue classes.
\newblock {\em Math. Comput.}, 65(216):1717--1735, oct 1996.

\bibitem{Cormen}
Thomas~H. Cormen, Charles~E. Leiserson, Ronald~L. Rivest, and Clifford Stein.
\newblock {\em Introduction to Algorithms, 3rd Edition}.
\newblock {MIT} Press, 2009.

\bibitem{Cadences_Gardelle}
J.~Gardelle.
\newblock Cadences.
\newblock {\em Math\'ematiques et Sciences humaines}, 9:31--38, 1964.

\bibitem{LevcopoulosAlgo}
Christos Levcopoulos and Andrzej Lingas.
\newblock Bounds on the length of convex partitions of polygons.
\newblock In Mathai Joseph and Rudrapatna Shyamasundar, editors, {\em
  Foundations of Software Technology and Theoretical Computer Science}, pages
  279--295, Berlin, Heidelberg, 1984. Springer Berlin Heidelberg.

\bibitem{Linnik}
U.~V. Linnik.
\newblock On the least prime in an arithmetic progression. {I}. {T}he basic
  theorem.
\newblock {\em Rec. Math. [Mat. Sbornik] N.S.}, 15(57):139--178, 1944.

\bibitem{Cadences_Lothaire}
M.~Lothaire.
\newblock {\em Combinatorics on Words}.
\newblock Cambridge Mathematical Library. Cambridge University Press, 1997.

\bibitem{Mertens}
Franz Mertens.
\newblock Ein beitrag zur analytischen zahlentheorie.
\newblock {\em Journal f^^c3^^bcr die reine und angewandte Mathematik},
  78:46--62, 1874.

\bibitem{Schoenhage}
A.~Sch{\"o}nhage and V.~Strassen.
\newblock Schnelle multiplikation gro{\ss}er zahlen.
\newblock {\em Computing}, 7(3):281--292, Sep 1971.

\bibitem{ConvAppThompson}
William~B. Thompson, Peter Shirley, and James~A. Ferwerda.
\newblock A spatial post-processing algorithm for images of night scenes.
\newblock {\em Journal of Graphics Tools}, 7(1):1--12, 2002.

\bibitem{XylourisOld}
Triantafyllos Xylouris.
\newblock {\em \"{U}ber die {N}ullstellen der {D}irichletschen {L}-{F}unktionen
  und die kleinste {P}rimzahl in einer arithmetischen {P}rogression}, volume
  404 of {\em Bonner Mathematische Schriften [Bonn Mathematical Publications]}.
\newblock Universit\"{a}t Bonn, Mathematisches Institut, Bonn, 2011.
\newblock Dissertation for the degree of Doctor of Mathematics and Natural
  Sciences at the University of Bonn, Bonn, 2011.

\end{thebibliography}

\end{document}